\author{Yury Kartynnik
  \and Andrew Ryzhikov}
\title{On Minimum Maximal Distance-$k$ Matchings}
\newtheorem{theorem}{Theorem}
\newtheorem{lemma}{Lemma}
\newtheorem{corollary}{Corollary}
\newtheorem{claim}{Claim}
\newtheorem{propos}{Proposition}
\begin{document}
\maketitle
\begin{abstract}
    We study the computational complexity of several problems connected with
    finding a maximal distance-$k$ matching of minimum cardinality or minimum
    weight in a given graph. We introduce the class of $k$-equimatchable graphs
    which is an edge analogue of $k$-equipackable graphs. We prove that the
    recognition of $k$-equimatchable graphs is co-NP-complete for any fixed \begin{math}k
    \ge 2\end{math}. We provide a simple characterization for the class of strongly
    chordal graphs with equal $k$-packing and $k$-domination numbers. We also
    prove that for any fixed integer \begin{math}\ell \ge 1\end{math} the problem of finding a
    minimum weight maximal distance-$2\ell$ matching and the problem of finding
    a minimum weight \begin{math}(2 \ell - 1)\end{math}-independent dominating set are NP-hard to
    approximate in chordal graphs within a factor of \begin{math}c \ln |V(G)|\end{math}, where $c$
    is a fixed constant (thereby improving the NP-hardness result of Chang for
    the independent domination case). Finally, we show the NP-hardness of the
    minimum maximal induced matching and independent dominating set problems in
    large-girth planar graphs.

    \textit{Note:} This version (as compared to the journal submission) contains corrections to Section 4.
\end{abstract}

\section{Introduction}

In this paper we consider finite undirected graphs without loops and multiple edges. Let $G$ be a graph with the vertex set $V(G)$ and the edge set $E(G)$. By $L(G)$ we denote the line graph of $G$, and by $G^k$ we denote the $k$-th power of $G$.
The distance between a vertex \begin{math}v \in V(G)\end{math} and an edge \begin{math}e \in E(G)\end{math} in a graph $G$ is the length of a shortest path between $v$ and any end vertex of $e$, i.e. \begin{math}\mathrm{dist}_G(v, e) = \min \{\mathrm{dist}_G(v, u), \mathrm{dist}_G(v, w)\}\end{math} for \begin{math}e = uw\end{math}. The distance between two edges \begin{math}e_1, e_2 \in E(G)\end{math} is the minimum of the distances between $e_1$ and any end vertex of $e_2$, i.e. \begin{math}\mathrm{dist}_G(e_1, e_2) = \min_{u \in e_1, v \in e_2} \{\mathrm{dist}_G(u, v)\}\end{math}. For other common definitions, see \cite{Bondy}.

A set $P$ of vertices in a graph is called a {\em $k$-packing} [\cite{MeirMoon}] if the distance between any two distinct vertices in this set is larger than $k$. A set $D$ of vertices in a graph $G$ is called a {\em $k$-dominating set} if for any vertex $v$ in $V(G)$ there is a vertex in $D$ at a distance no more than $k$ from $v$. A set $I$ of vertices in a graph is called a {\em $k$-independent dominating set} if it is both a $k$-packing and a $k$-dominating set. A graph is called {\em $k$-equipackable} if all its maximal $k$-packings are of the same size, i.e. \begin{math}i_k(G) = \alpha_k(G)\end{math}.

A set $M$ of edges in a graph $G$ is a {\em distance-$k$ matching} if all pairwise distances between distinct edges in $M$ are at least $k$. This notion was introduced by \cite{StockmeyerVazirani} under the name of $k$-separated matchings. A distance-$1$ matching is usually called simply a {\em matching}, and a distance-$2$ matching is also called an {\em induced matching}. We call a distance-$k$ matching {\em maximal} if it is not contained in any other distance-$k$ matching, and {\em maximum} if it has maximum cardinality. A graph is called {\em $k$-equimatchable} if all its maximal distance-$k$ matchings are of the same size. Obviously, a graph is $k$-eqiumatchable if and only if its line graph is $k$-equipackable.

Since a distance-$k$ matching in a graph $G$ corresponds to a $k$-packing in its line graph $L(G)$, and therefore to an independent set in $L(G)^k$, problems considering distance-$k$ matchings are special cases of problems considering $k$-packings and, in turn, independent sets. Several non-approximability results for maximum $k$-packings in bipartite and chordal graphs are presented in \cite{DistInd2014}. In \cite{DistIndDom} the problem of finding a $k$-independent dominating set of minimum size is proved to be NP-hard for bipartite graphs, and in \cite{McRaeHedetniemi} it is proved that there is no polynomial time $c$-approximation algorithm for this problem for any constant $c$ unless \begin{math}\mathrm{P} = \mathrm{NP}\end{math}.
\cite{Chang} has shown that the weighted distance-$1$ version of the problem is NP-hard in the class of chordal graphs, while the unweighted case of this problem is known to be solvable in polynomial time [\cite{Far82}]. For a survey on the complexity of the independent dominating set problem in various graph classes, see \cite{OGdW09}.

The \textsc{Minimum Maximal Distance-$k$ Matching} problem is to find a maximal distance-$k$ matching of the minimum size in a given graph $G$. \cite{MMM-Orlovich} mention some prior results on the complexity aspects of \textsc{Minimum Maximal Induced Matching} problem (which is an instance of \textsc{Minimum Maximal Distance-$k$ Matching} for the case of $k=2$) in various graph classes and prove that this problem is NP-hard to approximate in bipartite graphs within a factor of $n^{1-\varepsilon}$ for any constant \begin{math}\varepsilon > 0\end{math}, where $n$ is the number of vertices of the input graph.

Dualities between different independence-related parameters are also widely studied in the literature. The most widely studied class is the class of $1$-equipackable graphs, also known as well-covered graphs [\cite{PlummerCoverings}]. The recognition of such graphs is a co-NP-complete problem [\cite{ChvatalNote}], even in the class of \begin{math}K_{1, 4}\end{math}-free graphs [\cite{CaroGreedy}]. The recognition of $k$-equipackable graphs is also co-NP-complete for arbitrary fixed $k$ [\cite{IndDomSeq}].

The edge analogue of $k$-equipackable graphs is the class of $k$-equimatchable graphs. For the case of \begin{math}k = 1\end{math}, this class was introduced by \cite{equimatch} along with an implicit characterization which leads to a polynomial time recognition algorithm. See \cite{Plummer} for a survey of further related results. In \cite{Orlovich} it is proved that the recognition of $2$-equimatchable graphs is a co-NP-complete problem.

The maximum size of a $k$-packing in a graph $G$ is called the {\em $k$-packing number} of $G$ and is denoted by $\rho_k(G)$. Clearly, \begin{math}\rho_k(G) = \rho_1(G^k)\end{math}, where \begin{math}\rho_1(H) = \alpha(H)\end{math} is the classical independence number of the graph $H$.
The minimum size of a $k$-dominating set in a graph $G$ is called the {\em $k$-domination number} of $G$ and is denoted by $\gamma_k(G)$.
The minimum size of a $k$-independent dominating set in a graph $G$ is called the {\em $k$-independent domination number} of $G$ and is denoted by $i_k(G)$. It is easy to see that for every graph $G$, the inequalities \begin{math}\gamma_k(G) \le i_k(G) \le \rho_k(G)\end{math} hold (with \begin{math}\gamma_k(G) = \gamma_1(G^k)\end{math} and \begin{math}i_k(G) = i_1(G^k)\end{math} as well).

Dualities between independence and domination parameters comprise another important field of study. In \cite{MeirMoon} it is proved that the $k$-packing and $2k$-domination numbers equal for every tree. In \cite{ToppVolkman} the same equality is proved for block graphs, and in \cite{sun-free} this result is extended to the class of strongly chordal (also known as sun-free chordal [\cite{strongly-chordal}]) graphs, which includes powers of interval graphs and powers of block graphs. The trees with equal $k$-packing and $k$-domination numbers are characterized by \cite{ToppVolkman}. The graphs with equal $k$-packing and $2k$-packing numbers are characterized by \cite{DistPack}.

The paper is organized as follows. In Section \ref{k-equim} we show that the recognition of $k$-equimatchable graphs is co-NP-complete, generalizing the result of \cite{Orlovich} from $k=2$ to the case of an arbitrary \begin{math}k \ge 2\end{math}. It also follows from the proof that the problem of finding a minimum maximal distance-$k$ matching in a graph is NP-hard for every fixed \begin{math}k \ge 2\end{math}. In Section \ref{k-equip} we show that the class of strongly chordal graphs with equal maximum $k$-packing and $k$-dominating set sizes has a simple polynomial time characterization and show the connection of this result to the problem of characterizing $k$-equipackable graphs. In Section \ref{non-approx} we show that the problem of finding a minimum weight distance-$k$ matching is NP-hard to approximate in the class of chordal graphs within a factor of \begin{math}c \ln n\end{math}, where $n$ is the number of the input graph vertices, for some fixed constant $c$ and every fixed even $k$. We also establish a similar result for the problem of finding a minimum weight $k$-independent dominating set for odd $k$, thus strengthening the result of \cite{Chang} and of \cite{McRaeHedetniemi}. In Section \ref{sec-restr} we show that the problem of finding a minimum maximal distance-$2$ matching is NP-hard for planar graphs $G$ of girth lower bounded by \begin{math}O(|V(G)|^{\frac13 - \varepsilon})\end{math}, and the problem of finding a minimum independent dominating set is NP-hard for planar graphs $G$ of degree at most $5$ and girth lower bounded by \begin{math}O(|V(G)|^{1 - \varepsilon})\end{math}. We conclude with some open questions in Section \ref{sec-concl}.

\section{Recognition of $k$-equimatchable graphs}\label{k-equim}

In this section we prove that the recognition of $k$-equimatchable graphs is a co-NP-complete problem for any fixed \begin{math}k \ge 2\end{math} by constructing a polynomial time reduction from an NP-hard \textsc{$3$SAT} problem [\cite{Garey}]. We generalize the reduction proposed by \cite{Orlovich} for $2$-equimatchable graphs, which is in turn a generalization of the technique proposed by \cite{ChvatalNote}.

Let $X$ be a subset of vertices of $G$. By $G[X]$ we denote the subgraph of $G$ induced by $X$. By $E^t_G(X)$ we denote the set of all edges in $G$ at a distance less than $t$ from some vertex in $X$. For example, $E^1_G(\{v\})$ is the set of all the edges incident to the vertex $v$.

Consider the following polynomial time algorithm \textsc{SAT-to-Graph$_k$} that constructs a graph $G_{k}$ given an integer number \begin{math}k \ge 2\end{math} and a multiset of clauses \begin{math}C = \{c_1, c_2, \ldots, c_m\}\end{math}, \begin{math}m > 1\end{math}, over a set of boolean variables \begin{math}X = \{x_1, x_2, \ldots, x_n\}\end{math}, \begin{math}n > 1\end{math}. We shall assume that no clause contains a variable and its negation at the same time.

For every clause \begin{math}c_i \in C\end{math}, add a vertex $c'_i$ to $V(G_k)$. Let these vertices induce a clique $C'$ in $G_k$ by adding the corresponding edges.
Proceed depending on whether $k$ is even or odd (see Figure \ref{kr.fig.sat-to-graph} for the illustration of the cases \begin{math}k = 4\end{math} and \begin{math}k = 5\end{math}).

\textbf{Case 1:} \begin{math}k = 2\ell\end{math} for some integer \begin{math}\ell \ge 1\end{math}. For every variable $x_i$, add a set of new vertices \begin{math}X_i \cup \overline{X}_i\end{math} to $V(G_k)$, where \begin{displaymath}X_i = \{x_j^i: 1 \le j \le \ell + 1\},\end{displaymath} \begin{displaymath}\overline{X}_i = \{\overline{x}_j^i: 1 \le j \le \ell + 1\}.\end{displaymath}
Make \begin{math}(x_1^i, x_2^i, \ldots, x_{\ell + 1}^i, \overline{x}_{\ell + 1}^i, \overline{x}_\ell^i, \ldots, \overline{x}_1^i)\end{math} induce a simple path for every $i$ by adding the corresponding edges.

\textbf{Case 2:} \begin{math}k = 2\ell + 1\end{math} for some integer \begin{math}\ell \ge 1\end{math}. For every variable $x_i$, add a set of new vertices \begin{math}X_i \cup \overline{X}_i \cup \{a^i, b^i\}\end{math} to $V(G_k)$, where \begin{displaymath}X_i = \{x_1^i\} \cup \{x_j^i, x_j'^i: 2 \le j \le \ell + 1\},\end{displaymath} \begin{displaymath}\overline{X}_i = \{\overline{x}_1^i\} \cup \{\overline{x}_j^i, \overline{x}_j'^i: 2 \le j \le \ell + 1\}.\end{displaymath}
Make \begin{math}(x_1^i, x_2^i, \ldots, x_{\ell+1}^i, a^i, \overline{x}_{\ell+1}^i, \overline{x}_\ell^i, \ldots, \overline{x}_1^i)\end{math} induce a simple path for every $i$ by adding the corresponding edges. Add the edges $a^ib^i$, \begin{math}x_j^i x_j'^i\end{math}, \begin{math}\overline{x}_j^i \overline{x}_j'^i\end{math} to $E(G_k)$ for \begin{math}2 \le j \le \ell + 1\end{math} and every $i$.

Besides that, in both cases, for every occurrence of the literal $x_i$ (respectively, $\overline{x}_i$) in the clause $c_j$, add a simple \begin{math}(x_1^i, c'_j)\end{math}-path (respectively, \begin{math}(\overline{x}_1^i, c'_j)\end{math}-path) of length $\ell$, introducing new intermediate vertices when necessary.

\noindent\begin{figure}[!ht]
\centering
\usetikzlibrary{fit,graphs}
\def\nx{\overline{x}}
\begin{tikzpicture}[
    vertex/.style={circle, draw=black, fill, minimum width=1.5mm, inner sep=0pt, node distance=0pt},
    every label/.style={inner sep=2pt},
    baseline=(current bounding box.north),
    scale=1.2
]
\graph[nodes=vertex, empty nodes, no placement] {
    {
        [y=1.5]
        c1[x=-2,label={[name=c1label]left:$c'_1$}];
        cj[x=0,label=right:$c'_j$];
        cm[x=2,label={[name=cmlabel]right:$c'_m$}];
    };
    {
        [x=-0.7]
        xi1[y=0,label={left:$x^i_1$}] --
        xi2[y=-0.5,label={left:$x^i_2$}] --
        xi3[y=-1,label={left:$x^i_3$}];
    };
    {
        [x=0.7]
        nxi1[y=0,label={right:$\nx^i_1$}] --
        nxi2[y=-0.5,label={right:$\nx^i_2$}] --
        nxi3[y=-1,label={right:$\nx^i_3$}];
    };
    xi3 -- nxi3;
    xi1 -- c1;
    xi1 -- cj;
    nxi1 -- cm;
};

\draw (xi1) -- node[vertex,midway] {} (c1);
\draw (xi1) -- node[vertex,midway] {} (cj);
\draw (nxi1) -- node[vertex,midway] {} (cm);

\node[rectangle,dotted,gray,draw,fit=(c1)(c1label)(cm)(cmlabel),rounded corners=2mm,inner sep=2pt] {};
\node at (-1,1.5) {$\ldots$};
\node at (1,1.5) {$\ldots$};

\end{tikzpicture}
\hskip 3em
\usetikzlibrary{fit,graphs}
\def\nx{\overline{x}}
\begin{tikzpicture}[
    vertex/.style={circle, draw=black, fill, minimum width=1.5mm, inner sep=0pt, node distance=0pt},
    every label/.style={inner sep=2pt},
    baseline=(current bounding box.north),
    scale=1.2
]
\graph[nodes=vertex, empty nodes, no placement] {
    {
        [y=1.5]
        c1[x=-2,label={[name=c1label]left:$c'_1$}];
        cj[x=0,label=right:$c'_j$];
        cm[x=2,label={[name=cmlabel]right:$c'_m$}];
    };
    {
        [x=-0.7]
        xi1[y=0,label={right:$x^i_1$}] --
        xi2[y=-0.5,label={right:$x^i_2$}] --
        xi3[y=-1,label={below:$x^i_3$}];
    };
    {
        [x=-1.2]
        xi2 -- xpi2[y=-0.5,label={left:$x^{\prime i}_2$}];
        xi3 -- xpi3[y=-1,label={left:$x^{\prime i}_3$}];
    };
    {
        [x=0.7]
        nxi1[y=0,label={left:$\nx^i_1$}] --
        nxi2[y=-0.5,label={left:$\nx^i_2$}] --
        nxi3[y=-1,label={below:$\nx^i_3$}];
    };
    {
        [x=1.2]
        nxi2 -- nxpi2[y=-0.5,label={right:$\nx^{\prime i}_2$}];
        nxi3 -- nxpi3[y=-1,label={right:$\nx^{\prime i}_3$}];
    };
    xi3 -- ai[x=0,y=-1,label={above:$a^i$}] -- nxi3;
    ai -- bi[x=0,y=-1.5,label={below:$b^i$}];
    xi1 -- c1;
    xi1 -- cj;
    nxi1 -- cm;
};

\draw (xi1) -- node[vertex,midway] {} (c1);
\draw (xi1) -- node[vertex,midway] {} (cj);
\draw (nxi1) -- node[vertex,midway] {} (cm);

\node[rectangle,dotted,gray,draw,fit=(c1)(c1label)(cm)(cmlabel),rounded corners=2mm,inner sep=2pt] {};
\node at (-1,1.5) {$\ldots$};
\node at (1,1.5) {$\ldots$};

\end{tikzpicture}
    \caption{Gadgets produced by \textsc{SAT-to-Graph$_k$}: $k = 4$ (left) and $k = 5$,  assuming $x_1 \in c_1, x_1 \in c_j, \overline{x}_1 \in c_m$. Dashed lines enclose cliques.\label{kr.fig.sat-to-graph}}
\end{figure}
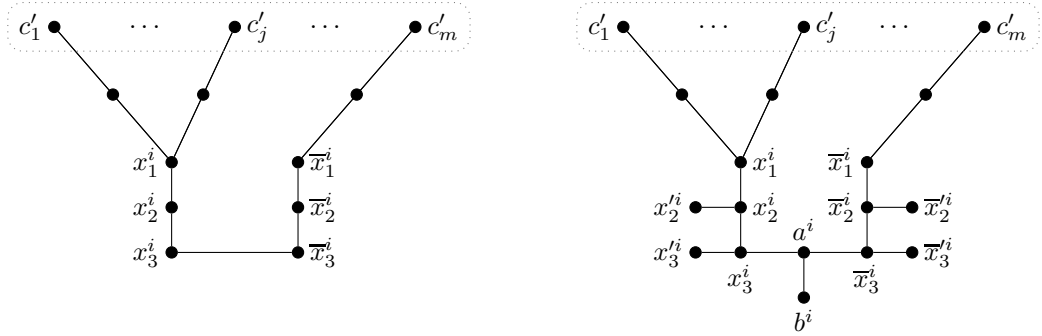

It is easy to see that the algorithm \textsc{SAT-to-Graph$_k$} runs in polynomial time. Below we investigate the properties of the graph $G_k$ constructed by this algorithm.
For an even $k$, denote by $V_i$ the union of the sets $X_i$, $\overline{X}_i$.
For an odd $k$, denote by $V_i$ the union of the sets $X_i$, $\overline{X}_i$, \begin{math}\{a^i, b^i\}\end{math}.

\begin{claim}\label{two-sizes} For every integer number \begin{math}k \ge 2\end{math}, the cardinality of any maximal distance-$k$ matching in $G_k$ is equal to $n$ or \begin{math}n + 1\end{math}, where \begin{math}n = |X|\end{math} is the number of variables in $X$. Moreover, if \begin{math}|M| = n\end{math} for some maximal distance-$k$ matching $M$ in $G_k$, then each edge of $M$ belongs to some $E(G_k[V_i])$.
\end{claim}

\begin{proof}
Note that the edge sets $E(G_k[V_i])$, \begin{math}i = \overline{1,n}\end{math}, and $E^l_{G_k}(C')$ partition $E(G)$. The edge set $E^\ell_{G_k}(C')$ induces a graph with the diameter \begin{math}2\ell + 1\end{math}, \begin{math}\ell = \lfloor k/2 \rfloor\end{math}, and thus this set cannot have more than one edge from the set $M$. For each \begin{math}1 \le i \le n\end{math}, the diameter of the graph induced by $V_i$ is equal to \begin{math}k + 1\end{math}, so this subgraph cannot have more than one edge from $M$. This covers all the edges of $G_k$, hence \begin{math}|M| \le n + 1\end{math}.

On the other hand, if for some $i$ the edge set $E^\ell_{G_k}(V_i)$ does not contain any edges of $M$, we can add to $M$ the edge $a^ib^i$ for an odd $k$ or the edge \begin{math}x_{\ell+1}^i \overline{x}_{\ell+1}^i\end{math} for an even $k$, and $M$ will still be a distance-$k$ matching. Hence, \begin{math}|M| \ge n\end{math}.

Next, let $M$ be a maximal distance-$k$ matching in $G_k$ such that \begin{math}|M| = n\end{math}. We shall prove that each edge of $M$ belongs to some of the graphs $G_k[V_i]$, $i=\overline{1,n}$. Assume the contrary: there is an edge $e$ that does not belong to $G_k[V_i]$ for any $i$. This means that \begin{math}e \in E^{\ell}_{G_k}(C')\end{math}. As \begin{math}|M| = n\end{math}, there is such an $i$ that $G_k[V_i]$ does not contain any edges of $M$. Recall that $E^{\ell}_{G_k}(V_i)$ should still contain an edge $f$ of $M$. This edge is not in $E(G_k[V_i])$, so \begin{math}f \in E^{\ell}_{G_k}(C')\end{math}. As $E^{\ell}_{G_k}(C')$ cannot contain more than one edge of $M$, we have \begin{math}f = e\end{math}.

Assume without loss of generality that $x^i_1$ is closer to $e$ than $\overline{x}^i_1$ and denote the distance from $e$ to $x_1^i$ as $d$. If $k$ is even, let \begin{math}e' = \overline{x}^i_{d+1} \overline{x}^i_{d+2}\end{math}, else let \begin{math}e' = \overline{x}^i_{d+2} \overline{x}_{d+2}'^i\end{math}. Recall that no clause contains both a variable and its negation, so \begin{math}M \cup \{e'\}\end{math} is a distance-$k$ matching. This contradicts the maximality of $M$.
\end{proof}

\begin{claim}\label{sizenp1} There is a distance-$k$ matching of size \begin{math}n + 1\end{math} in $G_k$ for any \begin{math}k \ge 2\end{math}.
\end{claim}

\begin{proof}
For an even $k$, the edge set  \begin{math}\{x_{\ell+1}^i \overline{x}_{\ell+1}^i: 1 \le i \le n\} \cup \{c_1 c_2\}\end{math} satisfies the condition.
For an odd $k$, the edge set \begin{math}\{a^i b^i: 1 \le i \le n\} \cup \{c_1 c_2\}\end{math} satisfies the condition.
\end{proof}

\begin{claim}\label{sizen} Assume that there exists an assignment $\phi$ that satisfies $C$. Then there exists a maximal distance-$k$ matching of size $n$ in the graph $G_k$ for any \begin{math}k \ge 2\end{math}.
\end{claim}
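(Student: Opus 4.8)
The plan is to build the required matching $M$ directly from the satisfying assignment $\phi$, taking exactly one edge out of each variable gadget $G(V_i)$, with the side determined by the truth value of $x_i$. Concretely, if $\phi(x_i)$ is true I would select the edge on the $x$-side of the $i$-th gadget and the symmetric $\nx$-side edge if $\phi(x_i)$ is false; for even $k = 2l$ the natural choice is the terminal path edge $x_1^i x_2^i$ (resp.\ $\nx_1^i \nx_2^i$), whereas for odd $k = 2l+1$ I would instead take the pendant edge $x_2^i x_2'^i$ (resp.\ $\nx_2^i \nx_2'^i$), i.e.\ an edge whose nearest endpoint to the connection vertex $x_1^i$ is one step away. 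This gives $|M| = n$ immediately, one edge per gadget, so it remains only to verify that $M$ is a distance-$k$ matching and that it is maximal.

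For the distance-$k$ property, note that two selected edges sit in distinct gadgets $V_i$, $V_{i'}$, and since these gadgets are joined only through the clause clique $C'$ by internally disjoint paths of length $l$, every shortest $e_i$--$e_{i'}$ route runs from $e_i$ to its connection vertex, then along a path of length $l$ into some $c'_a$, across $C'$ to some $c'_b$ (with $c'_a = c'_b$ or $c'_a c'_b \in E(C')$), and symmetrically out to $e_{i'}$. Writing $\delta$ for the distance from a selected edge to its connection vertex ($\delta = 0$ in the even case, $\delta = 1$ in the odd case), the shortest such route has length at least $2\delta + 2l$, attained when both edges reach a common clause. This is exactly $2l = k$ for even $k$ and $2l + 2 > k$ for odd $k$, so all pairwise distances are at least $k$; this is precisely the point that forces the one-step offset in the odd case, since the incident edges $x_1^i x_2^i$ would be only $2l = k - 1$ apart.

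For maximality I would show that every edge outside $M$ lies within distance $< k$ of some $e_i$, splitting the edges of $G$ into three types. Each $e_i$ covers all of its own gadget $G(V_i)$ (which has diameter $k+1$ by Claim~\ref{two-sizes}, but the selected edge is positioned so that its farthest gadget edge is at distance $k-1$) together with every connecting path leaving its selected side, the clause endpoint $c'_j$ being at distance $l$ for even $k$ and $l+1$ for odd $k$, both $< k$. The remaining clause-clique edges and the connecting paths on the non-selected sides are handled by the hypothesis: since $\phi$ satisfies every clause, each clause vertex $c'_j$ lies within distance $l$ (even) or $l+1$ (odd) of the selected edge of some true literal of $c_j$, which covers all clique edges incident to $c'_j$ and, running back along any path into $c'_j$, covers that entire connecting path up to its gadget endpoint at distance $k-1$. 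The main obstacle is this last cross-check: confirming that the edge chosen for a satisfying literal reaches far enough to cover the opposite-side connecting paths and, in particular, the clause edge $c'_1 c'_2$. This is exactly where the assumption that $\phi$ satisfies all of $C$ enters, and it is what blocks the extension to size $n+1$ exhibited in Claims~\ref{sizenp1} and~\ref{two-sizes}.
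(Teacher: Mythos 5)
Your proposal is correct and takes essentially the same approach as the paper: it selects exactly the same edges ($x_1^i x_2^i$ or $\nx_1^i \nx_2^i$ for even $k$, and the pendant edges $x_2^i x_2'^i$ or $\nx_2^i \nx_2'^i$ for odd $k$) according to $\phi$, and establishes maximality by checking that gadget edges, connecting-path edges, and the edges in $EN^l_G(\{c'_j\})$ for each (satisfied) clause all lie within distance less than $k$ of $M$. Your explicit distance computations (the $2\delta + 2l \ge k$ bound for the distance-$k$ property and the $k-1$ coverage bounds) merely fill in details the paper dismisses as ``easy to see,'' and they check out.
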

\begin{proof}
For an even $k$, for each $i$ such that \begin{math}x_i = 1\end{math} in $\phi$, add the edge  \begin{math}x_1^i x_2^i\end{math} to $M$. For each $i$ such that \begin{math}x_i = 0\end{math} in $\phi$, add the edge \begin{math}\overline{x}_1^i \overline{x}_2^i\end{math} to $M$.

If $k$ is odd, then for each $i$ such that \begin{math}x_i = 1\end{math} in $\phi$, add the edge \begin{math}x_2^i x_2'^i\end{math} to $M$. For each $i$ such that \begin{math}x_i = 0\end{math} in $\phi$, add the edge \begin{math}\overline{x}_2^i \overline{x}_2'^i\end{math} to $M$.

In both cases, if the clause $c_j$ is satisfied, then all the edges in the set $E^\ell_{G_k}(\{c'_j\})$ are at a distance less than $k$ from some edge in $M$. Besides that, for every $i$, any edge in the set $E(G_k[V_i])$ is at a distance less than $k$ from some edge in $M$. It is also easy to see that $M$ is a distance-$k$ matching. Thus, $M$ is a maximal distance-$k$ matching.
\end{proof}

\begin{claim}\label{phi-exists} For any \begin{math}k \ge 2\end{math}, if there exists a maximal distance-$k$ matching $M$ of size $n$ in $G_k$, then there exists an assignment $\phi$ such that no more than one clause in $C$ is not satisfied.
\end{claim}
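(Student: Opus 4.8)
The plan is to extract a near-satisfying assignment from the way $M$ dominates the clause clique $C'$, with a clique-covering argument as the combinatorial core. First I would fix the global shape of $M$. By Claim~\ref{two-sizes}, $|M|=n$ forces every edge of $M$ into some $E(G(V_i))$, and the proof of that claim shows each set $EN^l_G(V_i)$ meets $M$. Since the gadgets are pairwise vertex-disjoint and every edge of $E(G(V_i))$ is at distance at least $l$ from every $V_{i'}$ with $i'\neq i$ (hence outside $EN^l_G(V_{i'})$), the sets $E(G(V_i))\cap M$ are disjoint and nonempty; with $n$ gadgets and $|M|=n$ this gives exactly one edge $e_i$ per gadget. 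In particular $M$ meets no edge of $EN^l_G(C')$, because the gadget edge closest to the clique, $x^i_1x^i_2$ for even $k$ (and the analogous edge for odd $k$), already lies at distance exactly $l$ from $C'$.

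The core step is then a covering argument on $C'$. As $M$ is maximal and disjoint from $EN^l_G(C')$, every clique edge $c'_sc'_t$ must be within distance $<k$ of some $e_i$; since $\mathrm{dist}_G(e_i,c'_sc'_t)=\min\{\mathrm{dist}_G(e_i,c'_s),\mathrm{dist}_G(e_i,c'_t)\}$, at least one of $c'_s,c'_t$ then lies within distance $<k$ of $M$. Calling a clause vertex \emph{reached} when it is within distance $<k$ of $M$, this says that no two clause vertices are simultaneously unreached: the unreached vertices form an independent set of the complete graph $C'$, so there is at most one of them. Hence all but at most one clause vertex is reached.

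It remains to convert reachedness into satisfaction by a single assignment $\phi$. The tentative rule is $\phi(x_i)=1$ iff $e_i$ lies on the $X_i$-side of its gadget path (closer to $x^i_1$ than to $\nx^i_1$); a clause reached by an edge sitting on the arm of one of its own literals is then already satisfied. For a false literal $\ell\in c_j$ one checks that $e_\ell$ dominates $\ell$'s connecting path to $c'_j$ except for one ``gap'' edge, so maximality forces that gap edge to be covered from the $c'_j$ side. I would use this gap-edge analysis to separate variables that genuinely serve a clause from unconstrained ones, keep the side value on the former, and flip the latter to repair the clauses left unserved, the single tolerated exception being the at most one unreached clause vertex.

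The step I expect to be the main obstacle is exactly this conversion. Because $C'$ is a clique, an edge with $\mathrm{dist}_G(e_i,x^i_1)\le l-2$ reaches \emph{every} clause vertex through one extra clique step, so domination of a clause region is decoupled from any literal of that clause being true; indeed one can build maximal distance-$k$ matchings of size $n$ on satisfiable instances for which the naive side-assignment leaves two clauses unsatisfied. Overcoming this requires a genuinely global consistency argument—that the repairing flips can be made simultaneously without re-falsifying a served clause—rather than a local reading of the positions of the $e_i$, and it is here that the bound of one unsatisfied clause must be matched against the unique clique-independent unreached vertex.
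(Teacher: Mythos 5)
Your structural setup is sound and coincides with the paper's: Claim~\ref{two-sizes} pins every $M$-edge into a gadget, a disjointness count gives exactly one edge $e_i$ per gadget, and your endgame is precisely the paper's closing move --- if two clause vertices $c'_p, c'_q$ were both at distance at least $k$ from every edge of $M$, then $M \cup \{c'_p c'_q\}$ would be a distance-$k$ matching, contradicting maximality, so at most one clause vertex is ``unreached.'' The genuine gap is that this is where your proof ends, while the claim requires more: an \emph{assignment} leaving at most one clause unsatisfied. The bridge from ``at most one unreached clause vertex'' to ``at most one unsatisfied clause'' is exactly what you label the main obstacle, and the repair you sketch --- a gap-edge analysis separating ``serving'' variables from ``unconstrained'' ones, followed by flips --- is never specified or verified. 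You do not define which variables may be flipped, nor show that simultaneous flips cannot re-falsify a clause (a flipped variable occurs in several clauses), nor explain how the single tolerated unsatisfied clause is matched with the single unreached vertex; your supporting assertion that maximal size-$n$ matchings exist on which the naive side-assignment leaves two clauses unsatisfied is likewise stated without a construction. As written this is a proof plan with an admitted hole at its decisive step, not a proof.

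For comparison, the paper crosses this bridge in one line: it defines $\phi$ by the side ($X_i$ versus $\NX_i$) of the unique gadget edge --- your ``tentative rule'' --- and then asserts that a clause unsatisfied by $\phi$ has no $M$-edge within distance less than $k$ of its vertex, which is exactly the contrapositive ``reached $\Rightarrow$ satisfied'' whose failure you argue for. Your clique-shortcut observation (for even $k = 2l$, an edge at distance at most $l-2$ from $x^i_1$ is within $(l-2) + l + 1 = 2l - 1 < k$ of \emph{every} clause vertex) is a correct distance computation that the paper's proof does not address, so your diagnosis of where the difficulty sits is accurate: justifying the paper's assertion for $k \ge 4$ would require a finer maximality analysis (e.g., showing that an $e_i$ positioned to exploit the shortcut leaves some arm edge at distance exactly $k$ undominated unless a literal of the relevant clause is served on its own side, which under $\phi$ satisfies that clause). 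But identifying the crux is not resolving it: your proposal proves strictly less than Claim~\ref{phi-exists}, delivering only the bound on unreached clause vertices, with the unreached-to-unsatisfied correspondence missing in both directions.
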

\begin{proof} By Claim \ref{two-sizes}, all the edges from $M$ are in \begin{math}\cup_i E(G_k[V_i])\end{math}. For each $i$, if there is an edge in $M$ with both ends in $X_i$ (for an even $k$) or in \begin{math}X_i \cup \{a_i\}\end{math} (for an odd $k$), assign \begin{math}x_i = 1\end{math} in $\phi$. For each $i$, if there is an edge in $M$ with both ends in $\overline{X}_i$ (for an even $k$) or in \begin{math}\overline{X}_i \cup \{a_i\}\end{math} (for an odd $k$), assign \begin{math}x_i = 0\end{math} in $\phi$. This assignment is consistent, because $E(G_k[V_i])$ contains at most one edge of $M$ for each $i$. Assign arbitrary values to all the remaining variables.

By the construction of $\phi$, if a clause $c_i$ is not satisfied by the constructed assignment, then there are no edges in $M$ at a distance less than $k$ from $c'_j$. If some distinct clauses $c_p$ and $c_q$ are not satisfied by $\phi$, then the vertices $c'_p$ and $c'_q$ are at a distance at least $k$ from any edge in $M$. Hence, \begin{math}M \cup  \{c'_p c'_q\}\end{math} is a distance-$k$ matching, which contradicts the maximality of $M$. We obtain that $\phi$ does not satisfy at most one clause in $C$.
\end{proof}

\begin{theorem}\label{recogn-main} Recognition of $k$-equimatchable graphs is $\mathrm{co}$-$\mathrm{NP}$-complete for any fixed \begin{math}k \ge 2\end{math}.
\end{theorem}
\begin{proof} It is obvious that this problem is in co-NP. To show that it is NP-hard we use the \textsc{SAT-to-Graph$_k$} reduction from the \textsc{$3$SAT} problem.

Given the set of variables $X$, the multiset of clauses $C$ and an integer number \begin{math}k \ge 2\end{math}, we construct a graph $G_k$ by the algorithm \textsc{SAT-to-Graph$_k$}. Assume that either all the clauses in $C$ can be satisfied by some assignment $\phi$, or for any assignment $\phi$ at least two clauses in $C$ are unsatisfied. Then by Claims \ref{two-sizes}, \ref{sizenp1}, \ref{sizen}, and \ref{phi-exists}, $G$ is not $k$-equimatchable if and only if $C$ is satisfiable.

It remains to show that we can safely assume that there exists no assignment $\phi$ satisfying exactly one clause. Indeed, note that \textsc{$3$SAT} is still NP-complete when each clause in $C$ appears at least twice, since doubling each clause does not affect satisfiability. In this case, each assignment not satisfying at least one clause does not satisfy at least two clauses.
\end{proof}

\begin{corollary} The problem of $k$-equipackable line graph recognition is $\mathrm{co}$-$\mathrm{NP}$-complete for every fixed \begin{math}k \ge 2\end{math}.
\end{corollary}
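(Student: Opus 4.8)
The plan is to derive this from Theorem~\ref{recogn-main} by exploiting the correspondence, noted in the introduction, between distance-$k$ matchings in a graph and $k$-packings in its line graph; the corollary is essentially a reformulation, so the only real work is pinning down that correspondence together with maximality and cardinalities.

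First I would make the correspondence precise. For two distinct edges $e_1, e_2$ of a graph $G$ one has $\mathrm{dist}_{L(G)}(e_1, e_2) = \mathrm{dist}_G(e_1, e_2) + 1$, where on the left $e_1, e_2$ are regarded as vertices of $L(G)$: a shortest edge-to-edge path in $G$ realising distance $d$ yields a path on $d+1$ vertices of $L(G)$, and no shorter one exists. Hence a set $M \subseteq E(G)$ has all pairwise distances at least $k$ (i.e. is a distance-$k$ matching) if and only if the corresponding vertex set of $L(G)$ has all pairwise distances strictly greater than $k$ (i.e. is a $k$-packing). The natural bijection between $E(G)$ and $V(L(G))$ therefore preserves the property of being a distance-$k$ matching resp.\ $k$-packing, and so preserves the inclusion order; it restricts to a cardinality-preserving bijection between maximal distance-$k$ matchings of $G$ and maximal $k$-packings of $L(G)$. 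Consequently $G$ is $k$-equimatchable if and only if $L(G)$ is $k$-equipackable.

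Next I would check membership in co-NP. A graph $H$ fails to be $k$-equipackable exactly when it has two maximal $k$-packings of different sizes, so two such sets form a polynomially verifiable certificate for non-$k$-equipackability (testing that a vertex set is a $k$-packing and that it is maximal are both polynomial, using breadth-first distances). Recognising line graphs is polynomial, so the restriction to line-graph inputs stays within co-NP.

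Finally, for co-NP-hardness I would use the polynomial-time map $G \mapsto L(G)$. Its output is always a line graph, and by the equivalence above $G$ is $k$-equimatchable precisely when $L(G)$ is $k$-equipackable. This is thus a polynomial reduction from the $k$-equimatchable recognition problem, which is co-NP-complete by Theorem~\ref{recogn-main}, to $k$-equipackable line-graph recognition. The only step needing attention is the distance bookkeeping in the second paragraph, where the ``at least $k$'' convention for distance-$k$ matchings must be matched against the ``strictly greater than $k$'' convention for $k$-packings; once that is settled, membership in co-NP together with the reduction yields co-NP-completeness.
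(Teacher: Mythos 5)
Your proof is correct and takes essentially the same route the paper intends: the corollary is stated there without a separate proof, as an immediate consequence of Theorem~\ref{recogn-main} via the correspondence between distance-$k$ matchings in $G$ and $k$-packings in $L(G)$ noted in the introduction, and the map $G \mapsto L(G)$ as the reduction. You also correctly settle the one detail the paper leaves implicit, namely that $\mathrm{dist}_{L(G)}(e_1,e_2) = \mathrm{dist}_G(e_1,e_2) + 1$ for distinct edges, which is exactly what reconciles the ``at least $k$'' convention for distance-$k$ matchings with the ``larger than $k$'' convention for $k$-packings.
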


\begin{theorem}\label{equim-final} The problem of finding a minimum maximal distance-$k$ matching is $\mathrm{NP}$-hard for every fixed \begin{math}k \ge 2\end{math}.
\end{theorem}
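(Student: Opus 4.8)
The plan is to reuse, essentially verbatim, the graph $G$ produced by the {\sc SAT-to-Graph} reduction together with Claims \ref{two-sizes}--\ref{phi-exists}, observing that the size of a minimum maximal distance-$k$ matching in $G$ already encodes the satisfiability of $C$. First I would recall Claim \ref{two-sizes}: every maximal distance-$k$ matching of $G$ has cardinality exactly $n$ or $n+1$. Consequently the minimum maximal distance-$k$ matching has size $n$ precisely when some maximal distance-$k$ matching of size $n$ exists, and size $n+1$ otherwise. Thus the optimum value lives in a two-element set, and determining it amounts to deciding whether a size-$n$ maximal matching is present.

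Next I would tie this dichotomy to satisfiability using the two remaining claims. If $C$ is satisfiable, Claim \ref{sizen} yields a maximal distance-$k$ matching of size $n$, so the minimum is $n$. Conversely, if a maximal distance-$k$ matching of size $n$ exists, Claim \ref{phi-exists} produces an assignment leaving at most one clause of $C$ unsatisfied, i.e. satisfying at least $m-1$ clauses.

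To turn this into a reduction I would again invoke the {\sc Gap-$3$-SAT} promise with the fixed constant $\rho = \rho^*$. For all $m \ge m_0$, where $m_0$ is chosen so that $m - 1 > \rho^* m$, an assignment satisfying at least $m-1$ clauses is impossible in a no-instance; hence the existence of a size-$n$ maximal matching forces $C$ to be a yes-instance. Combining both directions, for $m \ge m_0$ the minimum maximal distance-$k$ matching of $G$ has size $n$ if and only if $C$ is satisfiable, and has size $n+1$ otherwise. Instances with $m < m_0$ are decided directly in constant time.

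Therefore deciding whether the minimum maximal distance-$k$ matching of $G$ has size at most $n$ is exactly deciding {\sc Gap-$3$-SAT}, which is NP-hard; since {\sc SAT-to-Graph} runs in polynomial time, finding a minimum maximal distance-$k$ matching is NP-hard for every fixed $k \ge 2$. The only real subtlety---the single unsatisfied clause permitted by Claim \ref{phi-exists}---is absorbed by the gap promise, which is precisely why the reduction starts from {\sc Gap-$3$-SAT} rather than ordinary satisfiability; I expect this bridging step to be the main, and essentially the only, point requiring care.
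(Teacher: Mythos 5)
Your proposal is correct and takes essentially the same approach as the paper: the paper's proof observes (via Claim \ref{two-sizes}) that $G$ has a maximal distance-$k$ matching of size at most $n$ exactly when $G$ is not $k$-equimatchable, and then invokes the NP-hardness of that recognition problem from Theorem \ref{recogn-main}, which rests on the very Claims \ref{sizen} and \ref{phi-exists} and the Gap-$3$-SAT promise that you inline. Your only departure is presentational---you unfold the gap argument directly rather than citing the recognition theorem---and your bridging of the single unsatisfied clause allowed by Claim \ref{phi-exists} via the choice of $m_0$ with $m - 1 > \rho^* m$ matches the paper's treatment exactly.
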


\begin{proof} Consider the graph $G_k$ constructed in the proof of Theorem \ref{recogn-main}. By Claim \ref{two-sizes}, $G_k$ is not $k$-equimatchable if and only if $G_k$ has a maximal distance-$k$ matching of size $n$. By Theorem \ref{recogn-main} checking if $G_k$ is not $k$-equimatchable is an NP-complete problem, thus checking whether $G_k$ has a maximal distance-$k$ matching of size less or equal to $n$ is also an NP-complete problem.
\end{proof}

\begin{corollary}
The problem of finding a minimum $k$-independent dominating set in line graphs is $\mathrm{NP}$-hard for every fixed \begin{math}k \ge 2\end{math}.
\end{corollary}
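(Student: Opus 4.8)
The plan is to derive this corollary directly from Theorem~\ref{equim-final} by making precise the correspondence, already mentioned in the introduction, between distance-$k$ matchings in a graph $G$ and $k$-packings in its line graph $L(G)$. The vertices of $L(G)$ are the edges of $G$, two of them being adjacent exactly when the corresponding edges share an endpoint. A short induction on path length gives $\mathrm{dist}_{L(G)}(e_1, e_2) = \mathrm{dist}_G(e_1, e_2) + 1$ for distinct edges $e_1, e_2$: adjacent edges are at edge-distance $0$ in $G$ and at distance $1$ in $L(G)$, and a shortest edge-walk realizing $\mathrm{dist}_G(e_1, e_2) = d$ corresponds to a path of length $d + 1$ in $L(G)$. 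Hence a set $M \subseteq E(G)$ has all pairwise edge-distances at least $k$ in $G$ if and only if the corresponding vertex set has all pairwise distances at least $k + 1$, i.e. strictly greater than $k$, in $L(G)$. So $M$ is a distance-$k$ matching in $G$ if and only if it is a $k$-packing in $L(G)$, and this is a cardinality-preserving bijection.

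Next I would identify maximal distance-$k$ matchings with $k$-independent domination sets. Under the above bijection a maximal distance-$k$ matching corresponds to a maximal $k$-packing in $L(G)$. A $k$-packing $P$ is maximal precisely when no vertex lies at distance greater than $k$ from every vertex of $P$, which is exactly the condition that $P$ is a $k$-covering; thus the maximal $k$-packings are precisely the sets that are simultaneously $k$-packings and $k$-coverings, i.e. the $k$-independent domination sets. Combining the two correspondences, a minimum maximal distance-$k$ matching in $G$ is mapped to a minimum $k$-independent domination set in $L(G)$ of equal size, so the two minimization problems coincide under the line-graph operation.

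It then remains only to close the reduction. Given the graph $G$ produced by the \textsc{SAT-to-Graph} construction in the proof of Theorem~\ref{recogn-main}, its line graph $L(G)$ is computable in polynomial time and is a line graph by construction, so it is a legitimate instance of the restricted problem. By the correspondences above, determining the size of a minimum $k$-independent domination set in $L(G)$ is the same as determining the size of a minimum maximal distance-$k$ matching in $G$, which is NP-hard by Theorem~\ref{equim-final}. Composing the \textsc{SAT-to-Graph} reduction with the line-graph construction therefore yields the claimed NP-hardness. I expect no real obstacle: the only points needing care are the verification of the identity $\mathrm{dist}_{L(G)} = \mathrm{dist}_G + 1$ on edges and the equivalence between maximality of a $k$-packing and its being a $k$-covering, both of which are routine.
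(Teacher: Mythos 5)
Your proposal is correct and is exactly the argument the paper intends: the corollary is stated without proof because it follows from Theorem~\ref{equim-final} via the correspondence $\mathrm{dist}_{L(G)}(e_1,e_2) = \mathrm{dist}_G(e_1,e_2) + 1$, already invoked in the introduction, under which (maximal) distance-$k$ matchings in $G$ become (maximal) $k$-packings, i.e.\ $k$-independent domination sets, in $L(G)$. You merely make explicit the routine verifications (the distance identity, maximality of a $k$-packing being equivalent to it also being a $k$-covering, and polynomial computability of $L(G)$) that the paper leaves implicit.
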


\section{Subclasses of $k$-equipackable graphs}\label{k-equip}

Let $k$ be a positive integer and $\mathcal{R}_k$ be the class of such graphs $G$ that \begin{math}\rho_k(G) = \rho_{2k}(G)\end{math}. Recall that a vertex \begin{math}v \in V(G)\end{math} is called a simplicial vertex if its closed neighborhood $N_G[v]$ induces a complete subgraph of $G$. A {\em simplicial clique} in $G$ is a maximal clique in $G$ containing a simplicial vertex of $G$. Let
\begin{displaymath}\mathcal{S}(G) = \{\text{the set of simplicial vertices of}\ C\ |\ C\ \text{is a simplicial clique of}\ G\} .\end{displaymath}
Then a transversal of $\mathcal{S}(G)$ is by definition a set of simplicial vertices that has exactly one common vertex with every simplicial clique of $G$.

In \cite{DistPack} the following simple characterization of the class $\mathcal{R}_1$ is obtained:

\begin{theorem}
\label{rautenbach}
A graph $G$ satisfies \begin{math}\rho_1(G) = \rho_2(G)\end{math} if and only if the following two statements hold:
\begin{enumerate}
\item[(i)] a subset of $V(G)$ is a maximum $2$-packing if and only if it is a transversal of $\mathcal{S}(G)$, and
\item[(ii)] for every transversal $P$ of $\mathcal{S}(G)$, the sets $N_G[u]$ for $u$ in $P$ partition $V(G)$.
\end{enumerate}
\end{theorem}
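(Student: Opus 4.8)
The plan is to prove the equivalence in two directions, relying throughout on two elementary facts. First, $\rho_2(G)\le\rho_1(G)$ always holds (a $2$-packing is in particular an independent set), so $\rho_1(G)=\rho_2(G)$ is equivalent to the existence of a maximum independent set that is also a $2$-packing. Second, for any $2$-packing $P$ the closed neighborhoods $N_G[u]$, $u\in P$, are pairwise disjoint, since a common vertex would put two members of $P$ at distance at most $2$. I would also record the structural fact that a simplicial vertex $s$ lies in a unique maximal clique, namely $N_G[s]$; hence the simplicial cliques are exactly the sets $N_G[s]$ for simplicial $s$, distinct simplicial cliques have disjoint sets of simplicial vertices, and consequently every transversal of $\mathcal S(G)$ has the same cardinality, equal to the number of simplicial cliques.

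For the ``only if'' direction, assume $\rho_1(G)=\rho_2(G)$ and fix a maximum $2$-packing $P_0$, which is then a maximum independent set. I would first show every $u\in P_0$ is simplicial: if $N_G[u]$ were not a clique it would contain two non-adjacent neighbours $x,y$ of $u$, and since every other member of $P_0$ is at distance at least $3$ from $u$ it is non-adjacent to both, so $(P_0\setminus\{u\})\cup\{x,y\}$ would be an independent set of size $|P_0|+1>\rho_1(G)$, a contradiction. As $P_0$ is a maximal independent set it is dominating, so the disjoint neighborhoods $N_G[u]$, $u\in P_0$, in fact partition $V(G)$. A short argument then shows $P_0$ meets every simplicial clique $C=N_G[s]$ exactly once: the unique $u\in P_0$ with $s\in N_G[u]$ must lie in $C$, and for two simplicial vertices adjacency forces equal closed neighborhoods, so $N_G[u]=C$. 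Thus $P_0$ is a transversal, and moreover every simplicial clique equals $N_G[u]$ for some $u\in P_0$, which yields the key structural consequence that \emph{the simplicial cliques partition $V(G)$}.

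This global partition is exactly what upgrades the statement from the single packing $P_0$ to \emph{all} transversals, and it is the crux of the proof. Given any transversal $P$, each $u\in P$ is simplicial with $N_G[u]$ equal to its unique simplicial clique, and $P$ contains exactly one vertex per simplicial clique; hence $\{N_G[u]:u\in P\}$ is precisely the family of simplicial cliques, which partitions $V(G)$. This establishes condition (ii) for every transversal, and since the blocks are cliques with pairwise-disjoint vertex sets the centres are at pairwise distance at least $3$, so $P$ is a $2$-packing; as $|P|$ equals the number of simplicial cliques, which is $|P_0|=\rho_2(G)$, it is a maximum $2$-packing. Together with the step showing $P_0$ is a transversal, this gives both implications of condition (i).

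For the ``if'' direction, assume (i) and (ii). A maximum $2$-packing exists and by (i) is a transversal $P$, and by (ii) the sets $N_G[u]$, $u\in P$, partition $V(G)$ into cliques (each $u$ being simplicial). For any independent set $Q$, mapping each $q\in Q$ to the unique $u\in P$ with $q\in N_G[u]$ is injective, because two vertices of $Q$ inside the same clique $N_G[u]$ would be adjacent; hence $\rho_1(G)\le|P|=\rho_2(G)$, and equality follows from $\rho_2(G)\le\rho_1(G)$. The main obstacle is the second half of the ``only if'' direction: a priori a transversal could contain two simplicial vertices at distance $2$ (as happens in $P_3$) and thus fail to be a $2$-packing, and the real work lies in showing that the hypothesis $\rho_1=\rho_2$ forces the simplicial cliques to partition $V(G)$, which rules this out.
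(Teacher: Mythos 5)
Your proposal is correct, but there is nothing in the paper to compare it against line by line: the paper does not prove Theorem~\ref{rautenbach} at all --- it is quoted without proof from \cite{DistPack}, and the paper's own contribution is the theorem stated immediately after it, namely that $\rho_1(G)=\rho_2(G)$ holds if and only if the simplicial cliques partition $V(G)$. What is notable is that your blind argument passes through precisely that simplification as its crux: in the ``only if'' direction you show every vertex $u$ of a maximum $2$-packing is simplicial (via the exchange of $u$ for two non-adjacent neighbours $x,y$, which preserves independence because all other packing vertices are at distance at least $3$ from $u$), deduce that the closed neighbourhoods of the packing vertices are pairwise disjoint cliques covering $V(G)$, and conclude that these are exactly the simplicial cliques, so the simplicial cliques partition $V(G)$; conditions (i) and (ii) for \emph{all} transversals then follow from this global partition, which is the right way to upgrade from one packing to all transversals (your $P_3$ example correctly isolates why transversality alone does not give a $2$-packing). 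Your remaining steps --- every transversal has cardinality equal to the number $t$ of simplicial cliques, a clique meets a $2$-packing at most once so $\rho_2(G)\le t$, and in the ``if'' direction the injection of an arbitrary independent set into the blocks $N_G[u]$, $u\in P$, giving $\rho_1(G)\le|P|=\rho_2(G)$ --- closely mirror the counting arguments the paper uses when proving its simplification \emph{from} Theorem~\ref{rautenbach}. So the two routes are complementary: the paper takes the Joos--Rautenbach characterization as a black box and collapses (i) and (ii) to the partition condition, whereas your argument proves the cited theorem from scratch and obtains the stronger, simpler characterization directly along the way; the only implicit convention you adopt (shared by the paper's own later proof) is that ``simplicial clique'' means a \emph{maximal} clique containing a simplicial vertex, i.e.\ a set $N_G[s]$ for $s$ simplicial.
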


Actually, this characterization of class $\mathcal{R}_1$ is slightly superflous in the following sense:
\begin{propos}
    In the statement of Theorem \ref{rautenbach}, condition (ii) implies condition (i).
\end{propos}

This is a direct consequence of the following theorem\footnote{Note that the latter condition in Theorem \ref{cliquepart} is a reformulation of condition (ii) of Theorem \ref{rautenbach}. The presented proof, independent of Theorem \ref{rautenbach}, was kindly contributed by one of the anonymous reviewers.}:

\begin{theorem}
\label{cliquepart}
A graph $G$ satisfies \begin{math}\alpha(G) = \alpha(G^2)\end{math} if and only if the simplicial cliques of $G$ form a partition of $V(G)$.
\end{theorem}
\begin{proof}
    First assume that the simplicial cliques, say \begin{math}Q_1, \ldots, Q_t\end{math}, of $G$ form a partition of $V(G)$. For every $i$, \begin{math}1 \le i \le t\end{math}, let \begin{math}q_i \in Q_i\end{math} be a simplicial vertex in $G$. Clearly, any independent set in $G$ contains at most one vertex of each clique in $G$. Thus \begin{math}\alpha(G) \le t\end{math}. Since for all \begin{math}i \ne j\end{math}, \begin{math}Q_i \cap Q_j = \varnothing\end{math}, the distance between $q_i$ and $q_j$ is at least $3$, and therefore \begin{math}\{q_1, \ldots, q_t\}\end{math} is an independent set in $G^2$. Thus, \begin{math}\alpha(G) = \alpha(G^2)\end{math}.

    Now assume that \begin{math}\alpha(G) = \alpha(G^2)\end{math}. Let \begin{math}I = \{x_1, \ldots, x_s\}\end{math} be a maximum independent set in $G^2$, i.e. \begin{math}\alpha(G^2) = s\end{math}. Clearly, for every $i$, \begin{math}1 \le i \le s\end{math}, $N[x_i]$ is a clique in $G$, since otherwise \begin{math}\alpha(G^2) < \alpha(G)\end{math} holds (replacing the vertex $x_i$ in $I$ with its two nonadjacent neighbours would yield a larger independent set in $G$). Thus, $x_i$ is a simplicial vertex and \begin{math}Q_i = N[x_i]\end{math} is a simplicial clique in $G$. Clearly, for all \begin{math}i \ne j\end{math}, \begin{math}Q_i \cap Q_j = \varnothing\end{math} since $I$ is an independent set in $G^2$. If there was a vertex \begin{math}y \notin Q_1 \cup \ldots \cup Q_s\end{math}, then again \begin{math}\alpha(G^2) < \alpha(G)\end{math}, which is a contradiction. Hence \begin{math}Q_1 \cup \ldots \cup Q_s = V(G)\end{math}. Assume that $Q$ is a simplicial clique of $G$; let \begin{math}q \in Q\end{math} be a simplicial vertex in $G$. Since \begin{math}Q_1 \cup \ldots \cup Q_s = V(G)\end{math}, there is an $i$ with \begin{math}q \in Q_i\end{math}. Now, since $q$ is simplicial, we have \begin{math}Q \subseteq Q_i\end{math}, and since $Q$ and $Q_i$ are inclusion-maximal, we have \begin{math}Q = Q_i\end{math}. Thus, the simplicial cliques of $G$ form a partition of $V(G)$ which proves the theorem.
\end{proof}

As $k$-packings and $2k$-packings in $G$ correspond to $1$-packings and $2$-packings in $G^k$, a graph $G$ belongs to $\mathcal{R}_k$ if and only if \begin{math}G^k \in \mathcal{R}_1\end{math}. This leads to the following structural description of the graphs in $\mathcal{R}_k$.

\begin{corollary}\label{struct-rk} $G$ is a graph with \begin{math}\rho_k(G) = \rho_{2k}(G)\end{math} if and only if there exists a partition \begin{math}V_1, V_2, \ldots, V_t\end{math} of $V(G)$ such that the following two properties hold: \\
1) $V_i$ induces in $G$ a subgraph of diameter $k$ for \begin{math}1 \le i \le t\end{math};\\
2) $V_i$ contains a vertex $v_i$ such that for each edge \begin{math}vu \in E(G), v \in V_i, u \not \in V_i\end{math}, the distance between $v_i$ and $v$ equals $k$ for \begin{math}1 \le i \le t\end{math}.
\end{corollary}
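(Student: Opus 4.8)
The plan is to reduce everything to the preceding theorem (our simplification of Theorem \ref{rautenbach}) applied to the power $G^k$. By the observation just made, $k$-packings and $2k$-packings of $G$ are precisely the $1$-packings and $2$-packings of $G^k$, so $\rho_k(G) = \rho_{2k}(G)$ if and only if $\rho_1(G^k) = \rho_2(G^k)$, which holds if and only if the simplicial cliques of $G^k$ partition $V(G^k) = V(G)$. The entire content of the proof is therefore to translate the sentence ``the simplicial cliques of $G^k$ partition $V(G)$'' into the two conditions stated in terms of $G$, so I would spend the proof setting up a dictionary between $G$ and $G^k$ and then checking each direction.

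First I would record the dictionary. Two vertices are adjacent in $G^k$ exactly when their $G$-distance is at most $k$, so $N_{G^k}[w]$ equals the ball $B = \{u : \mathrm{dist}_G(u,w) \le k\}$. Hence $w$ is simplicial in $G^k$ if and only if this ball is a clique of $G^k$, i.e. all pairwise $G$-distances inside $B$ are at most $k$; this is exactly condition (1) with center $v_i = w$. Moreover, for a simplicial vertex $w$ the set $N_{G^k}[w]$ is the unique maximal clique of $G^k$ containing $w$, so the simplicial clique determined by $w$ is precisely the ball $B$, and two simplicial vertices at $G$-distance at most $k$ yield the same clique. This lets me identify the parts $V_i$ with balls $B_k(v_i)$ around simplicial vertices $v_i$ of $G^k$.

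For the forward direction I take the $V_i$ to be the simplicial cliques of $G^k$ and each $v_i$ a simplicial vertex with $V_i = B_k(v_i)$; condition (1) is immediate from simpliciality. For condition (2), if $v \in V_i$ has a $G$-neighbor $u \notin V_i$, then $\mathrm{dist}_G(v_i,u) \ge k+1$ since $u$ is outside the ball, while $\mathrm{dist}_G(v_i,u) \le \mathrm{dist}_G(v_i,v) + 1 \le k+1$; together with $\mathrm{dist}_G(v_i,v) \le k$ this forces $\mathrm{dist}_G(v_i,v) = k$. For the converse I start from a partition satisfying (1) and (2) and show each $V_i$ equals the ball $B_k(v_i)$: the inclusion $V_i \subseteq B_k(v_i)$ is immediate from (1), and $B_k(v_i) \subseteq V_i$ follows from (2) by a shortest-path argument, since a shortest $G$-path from $v_i$ to some $u \in B_k(v_i)$ that left $V_i$ would cross an edge $yz$ with $y \in V_i$, $z \notin V_i$, whereupon (2) gives $\mathrm{dist}_G(v_i,y) = k$ while $y$ being an interior vertex of a shortest path of length at most $k$ forces $\mathrm{dist}_G(v_i,y) < k$, a contradiction. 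Then (1) makes $v_i$ simplicial in $G^k$ with $V_i = N_{G^k}[v_i]$, and since a simplicial vertex lies in a unique maximal clique, every simplicial clique of $G^k$ coincides with one of the $V_i$; so the simplicial cliques of $G^k$ are exactly the parts and they partition $V(G)$, and the preceding theorem applied to $G^k$ yields $\rho_1(G^k) = \rho_2(G^k)$, hence $\rho_k(G) = \rho_{2k}(G)$.

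The one genuinely delicate point, which I would pin down explicitly at the outset, is the reading of condition (1): ``diameter $k$'' should be understood as the maximal pairwise $G$-distance inside $V_i$ (equivalently, $V_i$ being a clique of $G^k$), not the diameter of the induced subgraph $G(V_i)$, because a shortest path between two vertices of a ball may leave that ball and the two quantities can differ. With that convention fixed, the forward and backward translations reduce to the triangle inequality, and the only real combinatorial input is the crossing-edge argument for $B_k(v_i) \subseteq V_i$ together with the uniqueness of the maximal clique through a simplicial vertex, which is what closes the equivalence.
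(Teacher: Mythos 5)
Your proof is correct and takes essentially the same route as the paper: the corollary is stated there without an explicit proof, as an immediate consequence of applying the preceding theorem to $G^k$ via the correspondence between $k$-packings ($2k$-packings) in $G$ and $1$-packings ($2$-packings) in $G^k$, and your argument simply fills in that translation, identifying the parts $V_i$ with the balls $N_{G^k}[v_i]$ around simplicial vertices of $G^k$ and using the uniqueness of the maximal clique through a simplicial vertex. Your caveat about reading condition (1) as ``all pairwise $G$-distances within $V_i$ are at most $k$'' (i.e.\ $V_i$ is a clique of $G^k$) rather than as the diameter of the induced subgraph $G(V_i)$ is well taken -- a shortest $G$-path between two vertices of a simplicial ball of $G^k$ may indeed leave the ball, so the literal induced-subgraph reading would break the forward direction -- and with that convention fixed, both of your directions, including the crossing-edge argument for $B_k(v_i) \subseteq V_i$, are sound.
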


\begin{corollary}\label{rautenbach-equi} Every graph in $\mathcal{R}_k$ is $k$-equipackable.
\end{corollary}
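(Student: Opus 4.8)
The plan is to work directly from the structural description in Corollary \ref{struct-rk}. Fix a graph $G \in \mathcal{R}_k$ and let $V_1, \ldots, V_t$ be a partition of $V(G)$ satisfying the two properties of that corollary, with distinguished vertices $v_1, \ldots, v_t$. I would show that \emph{every} maximal $k$-packing of $G$ has size exactly $t$; since $t$ depends only on the fixed partition and not on the packing, this immediately yields that all maximal $k$-packings have the same cardinality, i.e. that $G$ is $k$-equipackable.

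First I would establish the upper bound. Property~1) says each $G(V_i)$ has diameter $k$, so any two vertices lying in the same part $V_i$ are at distance at most $k$ in $G$. Consequently a $k$-packing, whose vertices are pairwise at distance greater than $k$, can contain at most one vertex from each $V_i$. Hence $|P| \le t$ for every $k$-packing $P$, maximal or not.

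The core of the argument is the matching lower bound for a \emph{maximal} $k$-packing $P$: I claim $P$ meets every part. Suppose not, so that $P \cap V_j = \emptyset$ for some $j$. Then the distinguished vertex $v_j \in V_j$ lies outside $P$, and by maximality $v_j$ cannot be adjoined, so there is a vertex $p \in P$ with $\mathrm{dist}_G(v_j, p) \le k$. Since $p \notin V_j$, a shortest $v_j$--$p$ path, say $v_j = u_0, u_1, \ldots, u_d = p$ with $d \le k$, must leave $V_j$; let $u_a$ be its last vertex in $V_j$, so $a < d$ and the edge $u_a u_{a+1}$ runs from $V_j$ to its complement. Property~2) then forces $\mathrm{dist}_G(v_j, u_a) = k$, whereas along a shortest path $\mathrm{dist}_G(v_j, u_a) = a \le d - 1 \le k - 1$, a contradiction. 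Therefore $P$ meets every $V_i$, giving $|P| \ge t$ and, with the upper bound, $|P| = t$.

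The only delicate point is the lower bound, where I must extract a boundary edge of $V_j$ from a shortest path and apply property~2) to it; the key observation making this work is that property~2) pins the $V_j$-endpoint of \emph{any} such edge at distance exactly $k$ from $v_j$, which is incompatible with that endpoint being an interior vertex of a shortest path of length at most $k$ issuing from $v_j$. As a remark I would note an alternative route through $G^k$: since $k$-packings of $G$ are exactly the independent sets of $G^k$ and $G \in \mathcal{R}_k$ if and only if $G^k \in \mathcal{R}_1$, the statement reduces to showing that graphs in $\mathcal{R}_1$ are well-covered, which can be read off the simplicial-clique partition characterization; but the direct argument above is shorter and self-contained.
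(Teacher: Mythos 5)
Your proof is correct, and it takes a genuinely different (more self-contained) route than the paper's. The paper dispatches the corollary in three lines: it reduces to $k=1$ via the correspondence $G \in \mathcal{R}_k \Leftrightarrow G^k \in \mathcal{R}_1$, and then double-bounds every maximal independent set of a graph in $\mathcal{R}_1$ by the number $t$ of simplicial cliques in the partition --- at most one vertex per clique from above, and from below because a simplicial clique disjoint from the set would let its simplicial vertex extend it. You instead argue directly in $G$ from the partition $V_1, \ldots, V_t$ of Corollary \ref{struct-rk}, replacing the simplicial-vertex extension step with an explicit shortest-path argument: if a maximal $k$-packing $P$ misses $V_j$, maximality gives $p \in P$ with $\mathrm{dist}_G(v_j, p) \le k$, and the last $V_j$-vertex $u_a$ on a shortest $v_j$--$p$ path is the $V_j$-endpoint of a boundary edge sitting at distance $a \le k-1$ from $v_j$, contradicting property 2). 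This is precisely the distance-$k$ translation of the paper's lower-bound idea, but executing it in $G$ buys two things: it avoids the paper's silent use of the fact that maximal $k$-packings of $G$ are exactly the maximal independent sets of $G^k$ (maximality must be checked to transfer in both directions, which the paper's ``it suffices to prove the claim for $k=1$'' glosses over), and it yields the slightly stronger explicit conclusion that every maximal $k$-packing has cardinality exactly $t$, hence $i_k(G) = \rho_k(G) = t$. One hinge worth flagging: your contradiction requires the distance in property 2) of Corollary \ref{struct-rk} to be distance in $G$ rather than in the induced subgraph $G(V_j)$ (with induced distances, $\mathrm{dist}_{G(V_j)}(v_j,u_a) = k$ would be compatible with $\mathrm{dist}_G(v_j,u_a) = a < k$ and the step would fail); that is indeed the intended reading, being what the simplicial condition becomes under the $G^k$ translation, but it deserves a sentence in the write-up.
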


\begin{proof}
We start by proving the claim for $k=1$. We know that for every \begin{math}G \in \mathcal{R}_1\end{math}, the set of simplicial cliques partitions $V(G)$. Let $I$ be any maximum independent set of $G$, and let $t$ be the number of its simplicial cliques. Then both \begin{math}|I| \leq t\end{math} (because every clique can contain no more than one vertex of an independent set) and \begin{math}|I| \geq t\end{math} (because if there exists a simplicial clique containing no vertices of an independent set, the simplicial vertex from the clique can extend the independent set).

Now, if \begin{math}G \in \mathcal{R}_k\end{math} for \begin{math}k > 1\end{math}, the claim of the theorem is implied by \begin{math}G^k \in \mathcal{R}_1\end{math} and thus by the 1-equipackability (well-coveredness) of $G^k$.
\end{proof}

Applying the structural characterization of Corollary \ref{struct-rk} to line graphs, one can obtain a characterization of graphs with equal maximum distance-$k$ and distance-$2k$ matching sizes given by \cite{DistPack}. Corollary \ref{rautenbach-equi} implies that such graphs constitute a subclass of $k$-equimatchable graphs.

\begin{corollary}
Every graph with equal maximum distance-$k$ and distance-$2k$ matching sizes is $k$-equimatchable.
\end{corollary}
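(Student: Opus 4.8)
The plan is to pass everything through the line graph and reuse Corollary~\ref{rautenbach-equi}. The only structural fact I need is the distance identity $\mathrm{dist}_{L(G)}(e_1,e_2)=\mathrm{dist}_G(e_1,e_2)+1$ for distinct edges $e_1,e_2$ of $G$: two edges sharing a vertex are adjacent in $L(G)$ (distance $1$ there, distance $0$ in $G$), and in general a shortest path between endpoints in $G$ lifts to a shortest path in $L(G)$ that is one step longer. Consequently a set $M$ of edges of $G$ has all pairwise distances at least $k$ in $G$ precisely when the corresponding set of vertices of $L(G)$ has all pairwise distances greater than $k$ in $L(G)$. Thus the natural bijection between $E(G)$ and $V(L(G))$ carries distance-$k$ matchings of $G$ exactly onto $k$-packings of $L(G)$, as already noted in the introduction.

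Since this correspondence is a containment-preserving bijection, it carries maximal distance-$k$ matchings onto maximal $k$-packings and preserves cardinality. First I would use it to translate the hypothesis: the maximum distance-$k$ and distance-$2k$ matching sizes of $G$ equal $\rho_k(L(G))$ and $\rho_{2k}(L(G))$ respectively, so the assumption that they coincide says exactly that $\rho_k(L(G))=\rho_{2k}(L(G))$, i.e. $L(G)\in\mathcal{R}_k$. I would then invoke Corollary~\ref{rautenbach-equi}, which guarantees that every graph in $\mathcal{R}_k$ is $k$-equipackable; hence all maximal $k$-packings of $L(G)$ have the same size. Translating back through the same bijection, all maximal distance-$k$ matchings of $G$ have the same size, which is precisely the statement that $G$ is $k$-equimatchable.

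The argument is essentially bookkeeping once the line graph dictionary is in place, so the only point requiring care---and the step I would treat as the main obstacle---is checking that the edge-to-vertex bijection really does respect all three notions simultaneously: the distance threshold (via the $+1$ identity, which converts ``distance $\ge k$ in $G$'' into ``distance $>k$ in $L(G)$''), inclusion (hence maximality), and cardinality. Once these are confirmed, the hypothesis and the conclusion are just the $G$-side readings of $\rho_k(L(G))=\rho_{2k}(L(G))$ and of the $k$-equipackability of $L(G)$, and Corollary~\ref{rautenbach-equi} is exactly what bridges them.
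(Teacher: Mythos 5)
Your proof is correct and takes essentially the same route as the paper, which obtains this corollary precisely by applying Corollary~\ref{rautenbach-equi} to line graphs via the correspondence between distance-$k$ matchings in $G$ and $k$-packings in $L(G)$ noted in the introduction. Your extra care with the distance shift $\mathrm{dist}_{L(G)}(e_1,e_2)=\mathrm{dist}_G(e_1,e_2)+1$, which reconciles the ``at least $k$'' threshold for matchings with the ``larger than $k$'' threshold for packings, is exactly the bookkeeping the paper leaves implicit.
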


A {\em sun} is a chordal graph \begin{math}G = (V, E)\end{math} such that $V$ can be partitioned into two sets \begin{math}X = \{x_i : 1 \le i \le n\}\end{math} and \begin{math}Y = \{y_i : 1 \le i \le n\}\end{math} so that

1) $X$ is an independent set in $G$,

2) \begin{math}y_iy_{i+1} \in E(G), 1 \le i \le n - 1\end{math}, \begin{math}y_ny_1 \in E(G)\end{math},

3) \begin{math}x_iy_j \in E(G)\end{math} if and only if \begin{math}j = i\end{math}, or \begin{math}j = i + 1\end{math}, or \begin{math}i = n, j = 1\end{math}.

A chordal graph is called {\em sun-free chordal} [\cite{sun-free}] if it does not contain any sun as an induced subgraph. This is actually one of the characterizations of strongly chordal graphs [\cite{strongly-chordal}].

In \cite{sun-free} it has been proved that for every sun-free chordal graph $G$ the equality \begin{math}\gamma_k(G) = \rho_{2k}(G)\end{math} holds. Thus, we obtain the following simple characterization of strongly chordal graphs with equal $k$-domination and $k$-packing numbers. According to Corollary \ref{struct-rk}, their structure is analogous to the structure of the trees with equal $k$-domination and $k$-packing numbers obtained by \cite{ToppVolkman} and, more generally, to the structure of block graphs with the same property obtained by \linebreak\cite{wdblock}.

\begin{theorem}\label{sfc} The following statements are equivalent for a strongly chordal graph $G$: \\
1) \begin{math}\gamma_k(G) = \rho_k(G)\end{math};\\
2) \begin{math}G \in \mathcal{R}_k\end{math}.
\end{theorem}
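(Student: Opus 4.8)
The plan is to reduce the entire equivalence to the single external fact, recalled in the introduction from \cite{sun-free}, that every sun-free chordal graph satisfies $\gamma_k(G) = \rho_{2k}(G)$, combined with two elementary monotonicity facts. The first is $\rho_{2k}(G) \le \rho_k(G)$, which holds for every graph because any $2k$-packing is in particular a $k$-packing (a pairwise distance exceeding $2k$ certainly exceeds $k$), so maximizing over the smaller family gives the smaller value. The second, $\gamma_k(G) \le \rho_k(G)$, is already recorded in the chain $\gamma_k(G) \le i_k(G) \le \rho_k(G)$. Thus, for a sun-free chordal $G$, the three quantities are locked into the relation $\gamma_k(G) = \rho_{2k}(G) \le \rho_k(G)$, and the theorem amounts to observing when the inequality here becomes an equality.

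With this setup both implications are one-line substitutions. For $(2) \Rightarrow (1)$, I would unfold the definition: $G \in \mathcal{R}_k$ means precisely $\rho_k(G) = \rho_{2k}(G)$; substituting the cited identity $\rho_{2k}(G) = \gamma_k(G)$ immediately yields $\rho_k(G) = \gamma_k(G)$, which is statement (1). For $(1) \Rightarrow (2)$, I would start from $\gamma_k(G) = \rho_k(G)$ and combine it with the same identity $\gamma_k(G) = \rho_{2k}(G)$ to obtain $\rho_k(G) = \rho_{2k}(G)$, i.e. $G \in \mathcal{R}_k$ by definition. Since $\rho_{2k}(G)$ and $\gamma_k(G)$ coincide on this class, the condition $\rho_k(G) = \rho_{2k}(G)$ and the condition $\rho_k(G) = \gamma_k(G)$ are literally the same statement, so the two directions are really a single equivalence read two ways.

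I do not expect any genuine obstacle: all the mathematical content is absorbed into the cited result $\gamma_k(G) = \rho_{2k}(G)$, and once that is granted the argument is a short chain of equalities. The only points demanding care are bookkeeping ones, namely correctly invoking the definition of $\mathcal{R}_k$ and correctly orienting the inequality $\rho_{2k}(G) \le \rho_k(G)$. It is worth stating explicitly that the sun-free chordal hypothesis is used only through this cited equality, so that the reader sees exactly where that assumption enters and that nothing further about the structure of $G$ is required.
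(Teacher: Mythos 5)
Your proposal is correct and follows essentially the same route as the paper, whose proof of Theorem~\ref{sfc} is the (implicit) one-line observation in the preceding text: since $\gamma_k(G) = \rho_{2k}(G)$ holds for every sun-free chordal graph by \cite{sun-free}, the condition $\gamma_k(G) = \rho_k(G)$ is literally the condition $\rho_{2k}(G) = \rho_k(G)$, which is the definition of membership in $\mathcal{R}_k$. The monotonicity facts $\rho_{2k}(G) \le \rho_k(G)$ and $\gamma_k(G) \le \rho_k(G)$ that you include are correct but not actually needed for either direction.
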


Consider that for an arbitrary graph $G$, both statements in Theorem \ref{sfc} independently imply that $G$ is $k$-equipackable: the equality \begin{math}\gamma_k(G) = \rho_k(G)\end{math} implies \begin{math}i_k(G) = \rho_k(G)\end{math} because \begin{math}\gamma_k(G) \le i_k(G) \le \rho_k(G)\end{math}; and
\begin{math}G \in \mathcal{R}_k\end{math} implies that $G$ is $k$-equipackable by Corollary \ref{rautenbach-equi}. Theorem \ref{sfc} establishes that in the class of strongly chordal graphs, the sets of $k$-equipackable graphs defined by these two statements coincide. The following questions arise in connection with this result:

1) Are there any $k$-equipackable strongly chordal graphs besides those satisfying the statements of Theorem \ref{sfc}?

2) Can the equivalence of the statements mentioned in Theorem \ref{sfc} be extended beyond the class of chordal graphs?

\section[Weighted minimum maximal distance-$k$ matchings and $k$-independent dominating sets in chordal graphs]{Weighted minimum maximal distance-$k$ matchings\\and $k$-independent dominating sets in chordal graphs}\label{non-approx}

The \textsc{Weighted Minimum Maximal Distance-$k$ Matching} problem (which we shall refer to as \textsc{$k$-WMMM}) is the problem of finding a minimum weight maximal distance-$k$ matching in a given graph $G$ with a weight function \begin{math}w: E(G) \rightarrow \mathbb{Q}_+\end{math} defined on its edges, where $\mathbb{Q}_+$ is the set of positive rational numbers. The \textsc{Set Cover} problem is the problem of finding for the given universal set \begin{math}X = \left\{X_1, \ldots, X_{|X|}\right\}\end{math} and a collection \begin{math}S = \left\{S_1, \ldots, S_{|S|}\right\}\end{math} of its subsets a minimum number of sets from $S$ that cover $X$ [\cite{Garey}]. A family \begin{math}C \subseteq S\end{math} of sets is said to {\em cover} $X$ if each element of $X$ is contained in at least one set of $C$.

We shall build a construction connecting approximate solutions to the aforementioned problems for \begin{math}k = 2 \ell\end{math}. Assume we are given integer numbers \begin{math}\ell > 1\end{math} and \begin{math}d > 1\end{math}, a positive rational number $c$ and an instance \begin{math}(X, S)\end{math} of the \textsc{Set Cover} problem such that \begin{math}|X| \ge 5\ell+2\end{math}, \begin{math}c(\ln |X| - 1) \ge 1\end{math} and \begin{math}|S| \le |X|^d\end{math}. We construct the graph \begin{math}G = G(\ell, d, c, X, S) = (V, E)\end{math} as follows. Take \begin{math}V = X \cup \{v_{i,j} \mid 1 \le i \le |S|, 1 \le j \le 5\ell+1\}\end{math}, as the vertex set (where all the $v_{i,j}$ are new vertices). The edge set of $G$ is defined by

\begin{displaymath}E = \{x_i x_j \mid x_i, x_j \in X, i \ne j\} \cup \{x_i v_{j,1} \mid x_i \in S_j\} \cup \{v_{i,j} v_{i,j+1}\}\end{displaymath}
(see Figure \ref{kr.fig.2l-wmmm}).

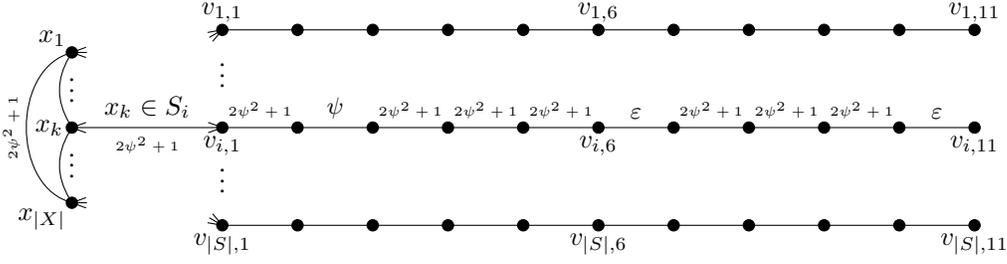
\begin{figure}[!ht]
	\centering%
	\usetikzlibrary{fit,graphs,quotes,shapes}
\begin{tikzpicture}[
    vertex/.style={circle, draw=black, fill, minimum width=1.5mm, inner sep=0pt, outer sep=0pt},
    every label/.style={inner sep=0pt, minimum width=0pt, label distance=0.3mm},
]
\graph[nodes=vertex, empty nodes, no placement] {
    {
        [edges=bend right,x=-1]
        x1[y=1,label=above left:$x_1$] -- xk[y=0,label=left:$x_k$] -- xx[y=-1,label=below left:$x_{|X|}$]
            --[bend left=70] x1;
    };
    {
        [y=1.3,/tikz/label position=above]
        v11[x=1,label={[name=v11label]$v_{1,1}$}] --
        v12[x=2] --
        v13[x=3] --
        v14[x=4] --
        v15[x=5] --
        v16[x=6,label={[name=v16label]$v_{1,6}$}] --
        v17[x=7] --
        v18[x=8] --
        v19[x=9] --
        v110[x=10] --
        v111[x=11,label={[name=v111label]$v_{1,11}$}];
    };
    {
        [y=0,/tikz/label position=below]
        vi1[x=1,label={[name=vi1label]$v_{i,1}$}] --["\tiny$2\psi^2 + 1$"]
        vi2[x=2] --["\small$\psi$"]
        vi3[x=3] --["\tiny$2\psi^2 + 1$"]
        vi4[x=4] --["\tiny$2\psi^2 + 1$"]
        vi5[x=5] --["\tiny$2\psi^2 + 1$"]
        vi6[x=6,label={[name=vi6label]$v_{i,6}$}] --["\small$0$"]
        vi7[x=7] --["\tiny$2\psi^2 + 1$"]
        vi8[x=8] --["\tiny$2\psi^2 + 1$"]
        vi9[x=9] --["\tiny$2\psi^2 + 1$"]
        vi10[x=10] --["\small$0$"]
        vi11[x=11,label={[name=vi11label]$v_{i,11}$}];
    };
    {
        [y=-1.3,/tikz/label position=below]
        vs1[x=1,label={[name=vs1label]$v_{|S|,1}$}] --
        vs2[x=2] --
        vs3[x=3] --
        vs4[x=4] --
        vs5[x=5] --
        vs6[x=6,label={[name=vs6label]$v_{|S|,6}$}] --
        vs7[x=7] --
        vs8[x=8] --
        vs9[x=9] --
        vs10[x=10] --
        vs11[x=11,label={[name=vs11label]$v_{|S|,11}$}];
    };
};
\node[rotate=90] at (-1.8, 0) {\tiny $2 \psi^2 + 1$};

\draw (xk) --node[above] {$x_k \in S_i$} node[below] {\tiny $2 \psi^2 + 1$} (vi1);
\draw (v11) -- +(-170:0.2) (v11) -- +(-150:0.2) (v11) -- +(-130:0.2);
\draw (vi1) -- +(-160:0.2) (vi1) -- +(160:0.2);
\draw (vs1) -- +(170:0.2) (vs1) -- +(150:0.2) (vs1) -- +(130:0.2);

\draw (x1) -- +(20:0.2) (x1) -- +(0:0.2) (x1) -- +(-20:0.2);
\draw (xk) -- +(20:0.2) (xk) -- +(-20:0.2);
\draw (xx) -- +(20:0.2) (xx) -- +(0:0.2) (xx) -- +(-20:0.2);

\node at (-1, -0.4) {$\vdots$};
\node at (-1, 0.6) {$\vdots$};

\node at (1, -0.6) {$\vdots$};
\node at (1, 0.8) {$\vdots$};
\end{tikzpicture}%
	\caption{The construction used in the reduction of \textsc{Set Cover} to \textsc{4-WMMM}\label{kr.fig.2l-wmmm}}
\end{figure}

It is easy to see that $G$ has no induced cycles of length $4$ or more, thus $G$ is chordal.

As \begin{math}|S| \le |X|^d\end{math} and \begin{math}|X| \ge 5\ell + 2\end{math}, we have
\begin{displaymath}|V| = |X| + (5\ell+1)|S| \le |X| + (5\ell+1)|X|^d \le |X|^{d+1}.\end{displaymath}

Let $p$ be the maximum integer number less than \begin{math}\ln |V|\end{math}. Assume without loss of generality that \begin{math}p > 1\end{math}. We have \begin{math}c p \ge c (\ln |X| - 1) \ge 1\end{math}. Take \begin{math}\psi = c p |S|  \ge |S|\end{math}. Assign a weight of $\psi$ to the edges $v_{i,\ell}v_{i,\ell+1}$, \begin{math}1 \le i \le |S|\end{math}. Assign a weight of zero to the edges $v_{i,3\ell}v_{i,3\ell+1}$ and $v_{i,5\ell}v_{i,5\ell+1}$, \begin{math}1 \le i \le |S|\end{math}. Assign a weight of \begin{math}2 \psi^2 + 1\end{math} to all the remaining edges. Let $\omega$ denote the weight function defined this way.

Clearly, it is possible to build \begin{math}G(\ell, d, c, X, S)\end{math} in time polynomial in $|X|$. The next theorem shows how to use $G$ as an approximation preserving reduction of the \textsc{Set Cover} problem to the \textsc{$2\ell$-WMMM} problem in chordal graphs.

\begin{theorem}
\label{WMMM}
Assume that $M^*$ is a minimum weight maximal distance-$2\ell$ matching in the chordal graph \begin{math}G(\ell, d, c, X, S)\end{math}. Let $D^*$ be a set of such indices $i$ that \begin{math}\left\{S_i \mid i \in D^*\right\}\end{math} is a minimum set cover of $X$ with respect to $S$. Let $M_A$ be any distance-$2 \ell$ matching in $G$ such that \begin{math}\omega(M_A) / \omega(M^*) \le c \ln |V(G)|\end{math}. Then a set of indices $D_{M_A}$ such that \begin{math}\left\{S_i \mid i \in D_{M_A}\right\}\end{math} cover $X$ and \begin{math}|D_{M_A}| / |D^*| \le c (d + 1) \ln |X|\end{math} can be computed given $M_A$ in time polynomial in $|X|$.
\end{theorem}
\begin{proof}
No edge of weight \begin{math}2 \psi^2 + 1\end{math} can occur in $M^*$, because the set
\begin{displaymath}\{v_{i,\ell} v_{i,\ell+1}, v_{i,5\ell} v_{i,5\ell+1} : 1 \le i \le |S|\}\end{displaymath}
is a maximal distance-$2\ell$ matching of weight \begin{math}w_0 = (\psi + 0) |S| \ge \omega(M^*)\end{math} and \begin{math}w_0 < 2 \psi^2 + 1\end{math}. Neither can such an edge occur in $M_A$, because otherwise we would have \begin{displaymath}2 \psi^2 + 1 \le \omega(M_A) \le \left(c \ln |V|\right) \omega(M^*) < c (p + 1) \cdot (\psi + 0) |S| < 2 c p |S| \psi = 2 \psi^2,\end{displaymath}
where the second inequality follows from the approximation quality of $M_A$, the third employs $p \geq \ln |V| - 1$ following from the definition of $p$, and the fourth follows from the assumption $p > 1$ giving $p + 1 < 2p$.

Consider an arbitrary maximal distance-$2\ell$ matching $M$ with no edges of weight \begin{math}2 \psi^2 + 1\end{math} in $G$. Each simple path \begin{math}L_i = v_{i,1} v_{i,2} \ldots v_{i,5\ell+1}\end{math} can contain exactly one or two edges of $M$. Moreover, either \begin{math}L_i \cap M = \{v_{i,3\ell} v_{i,3\ell+1}\}\end{math} or \begin{math}L_i \cap M = \{v_{i,\ell}v_{i,\ell+1}, v_{i,5\ell}v_{i,5\ell+1}\}\end{math}.

Let $D_M$ denote the set of such indices $i$ that each path $L_i$ contains exactly two edges of $M$. Then \begin{math}\omega(M) = \psi |D_M|\end{math}. We shall prove that the collection of sets \begin{math}S_M = \{S_i \mid i \in D_M\}\end{math} is a cover of the set $X$.

Assume the contrary. Let $x_t$ be an element that is not contained in any set from $S_M$. Then for every $i$ such that \begin{math}x_t \in S_i\end{math}, $L_i$ contains only one edge, namely \begin{math}v_{i,3\ell} v_{i,3\ell+1}\end{math}. Therefore \begin{math}M \cup \{x_t v_{i,1}\}\end{math} (for \begin{math}\ell = 1\end{math}) or \begin{math}M \cup \{v_{i,\ell-1}v_{i,\ell}\}\end{math} (for \begin{math}\ell \ge 2\end{math}) is a distance-$2\ell$ matching, contradicting the maximality of $M$.

On the other hand, as $D^*$ is a set of such indices $i$ that \begin{math}\left\{S_i \mid i \in D^*\right\}\end{math} is a minimum set cover of $X$, the set
\begin{displaymath}\{v_{i,\ell} v_{i,\ell+1}, v_{i,5\ell} v_{i,5\ell+1} \mid i \in D^*\} \cup \{v_{i,3\ell} v_{i,3\ell+1} \mid i \notin D^* \}\end{displaymath}
is a distance-$2\ell$ matching having the weight \begin{math}\psi |D^*| + 0 \cdot |S|\end{math}. Thus, \begin{math}\omega(M^*) = \psi |D^*|\end{math}.

Then
\begin{displaymath}\psi |D_{M_A}| = \omega(M_A) \le \left(c \ln |V|\right) \omega(M^*) = \left(c \ln |V|\right) \psi |D^*|,\end{displaymath}
so, since \begin{math}\ln |V| \le (d + 1) \ln |X|\end{math}, we have
\begin{displaymath}\psi |D_{M_A}| \le \left( c \ln |V| \right) \psi |D^*| \le \psi \cdot c (d + 1) \ln |X| \cdot |D^*|.\end{displaymath}
This provides \begin{math}|D_{M_A}| \le c(d+1) \ln |X| \cdot |D^*|\end{math}. It is easy to see that given $M_A$, $|D_{M_A}|$ is straightforward to obtain in time polynomial in $|X|$.
\end{proof}

In \cite{Dinur} it is shown that for every \begin{math}0 < z < 1\end{math} it is NP-hard to approximate the \textsc{Set Cover} problem within a factor of \begin{math}(1 - z) \ln n\end{math}, where $n$ is the cardinality of the universal set $X$. The instances of the \textsc{Set Cover} problem used in the proof of this result have no more than $n^{O(1/z)}$ subsets in $S$ [\cite{Dinur}]. Combined with Theorem \ref{WMMM} it implies the following.

\begin{corollary}
\label{WMMMcor}
For some positive rational constant $c$, it is \textrm{NP}-hard to approximate the problem \textsc{$2\ell$-WMMM} within a factor of \begin{math}c \ln |V(G)|\end{math} in chordal graphs.
\end{corollary}
\begin{proof}
Fixing an arbitrary $z$, \begin{math}0 < z < 1\end{math}, we may assume that the problem \textsc{Set Cover} is NP-hard to approximate within a factor of \begin{math}(1 - z) \ln |X|\end{math} with the restriction \begin{math}|S| \leq |X|^d\end{math} for some integer number $d$, where $X$ is the universal set and $S$ is the collection of its subsets [\cite{Dinur}].

But taking \begin{math}c = (1 - z)	/ (d + 1)\end{math}, building \begin{math}G(\ell, d, c, X, S)\end{math} and applying Theorem \ref{WMMM}, we can transform in polynomial time a \begin{math}c \ln |V(G)|\end{math}-approximation of \textsc{$2\ell$-WMMM} in this graph into a \begin{math}(1 - z) \ln |X|\end{math}-approximation of \textsc{Set Cover} subject to the mentioned restriction.
\end{proof}

Note that in the construction of \begin{math}G(\ell, d, c, X, S)\end{math} we are using only $3$ different edge weights upper bounded by a polynomial of $|S|$. Therefore the result of Theorem \ref{WMMM} still holds for the \textsc{$2\ell$-WMMM} problem with integer polynomially bounded edge weights (it is sufficient to take \begin{math}\lceil \psi \rceil\end{math} instead of $\psi$ and multiply all the weights in the proof by~$2 \psi$ to show this).

The idea of the proof of Theorem \ref{WMMM} can be applied to the vertex counterpart of the \textsc{$k$-WMMM} problem,
\textsc{Weighted $k$-Independent Dominating Set} (\textsc{$k$-WIDS} for short), which is the problem of finding a minimum weight $k$-independent dominating set in a given graph $G$ with a weight function \begin{math}w: V(G) \rightarrow \mathbb{Q}_+\end{math} defined on its vertices.

\begin{theorem}
    For some positive rational constant $c$, it is \textrm{NP}-hard to approximate the problem \textsc{$(2\ell - 1)$-WIDS} within a factor of \begin{math}c \ln |V(G)|\end{math} in chordal graphs.
\end{theorem}
\begin{proof}
By analogy with the proof of Theorem \ref{WMMM} and Corollary \ref{WMMMcor}, we will assume that for an arbitrary fixed positive integer number $\ell$ and some rational constant $c$ there exists a polynomial time \begin{math}c \ln |V(G)|\end{math}-approximation algorithm \textsc{\begin{math}(2\ell - 1)\end{math}-WIDS-Approx} for the problem \textsc{\begin{math}(2\ell - 1)\end{math}-WIDS} in chordal graphs, and show that for any integer number \begin{math}d > 1\end{math} there exists a polynomial time \begin{math}c(d+1) \ln |X|\end{math}-approximation algorithm for the \textsc{Set Cover} problem with the restriction \begin{math}|S| \le |X|^d\end{math}.

Consider the construction from the proof of Theorem \ref{WMMM} (Figure \ref{kr.fig.2l-wmmm}) but with chains of length \begin{math}5 \ell - 3\end{math} (formed by the vertices \begin{math}\{v_{i,j} \mid 1 \le i \le |S|, 1 \le j \le 5 \ell - 2\}\end{math}, Figure \ref{kr.fig.2lm1-wids}).

\begin{figure}[!ht]
\centering%
\usetikzlibrary{fit,graphs,quotes,shapes}
\begin{tikzpicture}[
    vertex/.style={circle, draw=black, fill, minimum width=1.5mm, inner sep=0pt, outer sep=0pt},
    every label/.style={inner sep=0pt, minimum width=0pt, label distance=0.3mm},
]
\graph[nodes=vertex, empty nodes, no placement] {
    {
        [edges=bend right,x=-1.5]
        x1[y=1,label=above:$x_1$,label={[label distance=0.3cm]\tiny$2\psi^2 + 1$}] -- xk[y=0,label=left:$x_k$] -- xx[y=-1,label=below:$x_{|X|}$,label={[label distance=0.4cm]below:\tiny$2\psi^2 + 1$}]
            --[bend left=70] x1;
    };
    {
        [y=1.3,/tikz/label position=above]
        v11[x=1,label={$v_{1,1}$}] --
        v12[x=2] --
        v13[x=3] --
        v14[x=4] --
        v15[x=5,label={$v_{1,5}$}] --
        v16[x=6] --
        v17[x=7] --
        v18[x=8,label={$v_{1,8}$}];
    };
    {
        [y=0,/tikz/label position=above]
        vi1[x=1,label=below:$v_{i,1}$,label={[label distance=0.8mm]\tiny$2\psi^2 + 1$}] --
        vi2[x=2,label={[label distance=0.8mm]\small$\psi$}] --
        vi3[x=3,label={[label distance=0.8mm]\tiny$2\psi^2 + 1$}] --
        vi4[x=4,label={[label distance=0.8mm]\tiny$2\psi^2 + 1$}] --
        vi5[x=5,label=below:$v_{i,5}$,label={[label distance=0.8mm]\small$0$}] --
        vi6[x=6,label={[label distance=0.8mm]\tiny$2\psi^2 + 1$}] --
        vi7[x=7,label={[label distance=0.8mm]\tiny$2\psi^2 + 1$}] --
        vi8[x=8,label=below:$v_{i,8}$,label={[label distance=0.8mm]\small$0$}];
    };
    {
        [y=-1.3,/tikz/label position=below]
        vs1[x=1,label={$v_{|S|,1}$}] --
        vs2[x=2] --
        vs3[x=3] --
        vs4[x=4] --
        vs5[x=5,label={$v_{|S|,5}$}] --
        vs6[x=6] --
        vs7[x=7] --
        vs8[x=8,label={$v_{|S|,8}$}];
    };
};
\draw (xk) --node[above] {$x_k \in S_i$} (vi1);
\draw (v11) -- +(-170:0.2) (v11) -- +(-150:0.2) (v11) -- +(-130:0.2);
\draw (vi1) -- +(-160:0.2) (vi1) -- +(160:0.2);
\draw (vs1) -- +(170:0.2) (vs1) -- +(150:0.2) (vs1) -- +(130:0.2);

\draw (x1) -- +(20:0.2) (x1) -- +(0:0.2) (x1) -- +(-20:0.2);
\draw (xk) -- +(20:0.2) (xk) -- +(-20:0.2);
\draw (xx) -- +(20:0.2) (xx) -- +(0:0.2) (xx) -- +(-20:0.2);

\node at (-1.5, -0.4) {$\vdots$};
\node at (-1.5, 0.6) {$\vdots$};

\node at (1, -0.6) {$\vdots$};
\node at (1, 0.8) {$\vdots$};
\end{tikzpicture}%
\caption{The construction used in the reduction from \textsc{Set Cover} to \textsc{3-WIDS}\label{kr.fig.2lm1-wids}}
\end{figure}
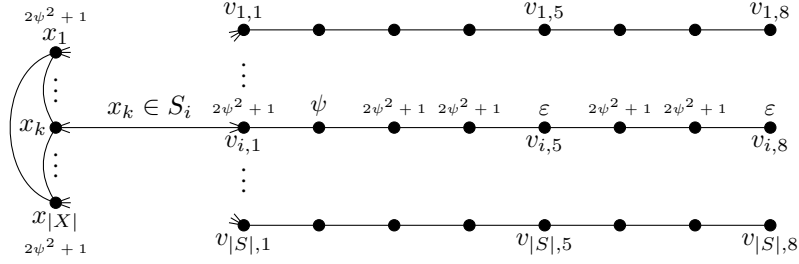

The weights are now assigned to the vertices (instead of the edges) as follows: the weight of $\psi$ is assigned to $v_{i,\ell}$, \begin{math}1 \le i \le |S|\end{math}, zero weight is assigned to $v_{i,3\ell-1} $ and $v_{i,5\ell-2}$, \begin{math}1 \le i \le |S|\end{math}, and all the remaining vertices are assigned the weight of \begin{math}2\psi^2 + 1\end{math}.

The rest of the proof is analogous to that of Theorem \ref{WMMM} and Corollary \ref{WMMMcor} and is therefore omitted.
\end{proof}

\section{Distance problems in large-girth planar graphs}\label{sec-restr}

Now we proceed to some hardness results for \textsc{Minimum Maximal Distance-$2$ Matching} (which is also known under the name of \textsc{Minimum Maximal Induced Matching}), the problem of finding a maximal distance-$2$ matching of minimum cardinality in a given graph. We consider the following supplementary transformation $T(G)$ applied to an arbitrary graph $G$: For every vertex $v$ in $G$, introduce the new vertices $a^v$, $b^v$, $c^v$, $d^v_i$, $e^v_i$ and the new edges \begin{math}v a^v\end{math}, \begin{math}a^v b^v\end{math}, \begin{math}v c^v\end{math}, \begin{math}c^v d^v_i\end{math}, \begin{math}d^v_i e^v_i\end{math} for \begin{math}1 \le i \le 2n+1\end{math}, where \begin{math}n = |V(G)|\end{math} (see Figure \ref{kr.fig.gadget}).

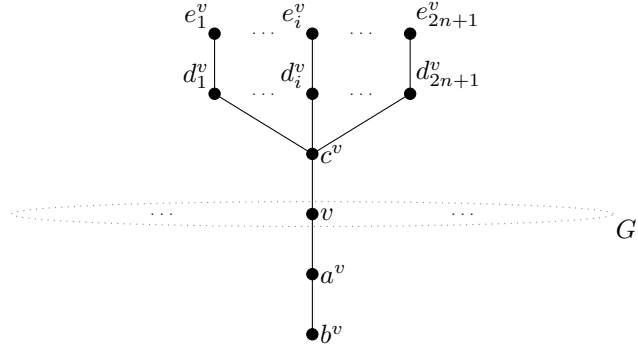
\begin{figure}[!ht]
\centering%
\begin{tikzpicture}[
    vertex/.style={circle, draw=black, fill, minimum width=1.5mm, inner sep=0pt, outer sep=0pt},
    every label/.style={inner sep=0pt, minimum width=0pt, label distance=0.2mm},
    yscale=0.8
]
\graph[nodes=vertex, empty nodes, no placement] {
    {
        [x=0]
        bv[y=-2,label=right:$b^v$] --
        av[y=-1,label=right:$a^v$] --
        v[y=0,label=right:$v$] --
        cv[y=1,label=right:$c^v$] -- 
        dvi[y=2,label=above left:$d^v_i$] --
        evi[y=3,label=above left:$e^v_i$];
    };
    {
        [x=-1.3]
        cv -- dv1[y=2,label=above left:$d^v_1$] -- ev1[y=3,label=above left:$e^v_1$];
    };
    {
        [x=1.3]
        cv -- dvl[y=2,label=above right:$d^v_{2n+1}$] -- evl[y=3,label=above right:$e^v_{2n+1}$];
    };
};
\node at (v) [draw,dotted,gray,ellipse,minimum width=8cm,label={[yshift=-2mm]right:$G$}] {};
\node at (-0.65, 2) {\tiny$\ldots$};
\node at (-0.65, 3) {\tiny$\ldots$};
\node at (0.65, 2) {\tiny$\ldots$};
\node at (0.65, 3) {\tiny$\ldots$};
\node at (-2, 0) {\tiny$\ldots$};
\node at (2, 0) {\tiny$\ldots$};
\end{tikzpicture}%
\caption{The construction of the graph $T(G)$\label{kr.fig.gadget}}
\end{figure}

Let $\sigma(G)$ denote the cardinality of a minimum maximal induced matching in $G$. Recall that \begin{math}\alpha(G) = \rho_1(G)\end{math} is the cardinality of a maximum independent set in $G$.

\begin{lemma} Let $G$ be an arbitrary graph. Then \begin{math}\sigma(T(G)) = 2n - \alpha(G)\end{math}.
\label{sigmaAlpha}
\end{lemma}
\begin{proof}
Let us prove that \begin{math}\sigma(T(G)) \le 2n - \alpha(G)\end{math}. Given a maximum independent set $I$ in $G$, build a maximal induced matching $M$ in $T(G)$ as follows: \begin{displaymath}M = \{v c^v \ |\ v \in I\} \cup \{a^v b^v, c^v d^v_1 \ |\ v \in V(G) \setminus I\}.\end{displaymath}
As \begin{math}|M| = \sum\limits_{v\in I} 1 + \sum\limits_{v\in V(G) \setminus I} 2 = 2n - |I| = 2n - \alpha(G)\end{math}, it proves the required inequality.

Let us now prove that \begin{math}\alpha(G) \ge 2n - \sigma(T(G))\end{math}. Consider a minimum maximal induced matching $M$ in $T(G)$. Note that if for some vertex \begin{math}v \in V(G)\end{math} the set $M$ contains any of the edges \begin{math}d^v_i e^v_i\end{math} (for some $i$), \begin{math}v a^v\end{math} or $uv$ for some \begin{math}u \in V(G)\end{math}, then by maximality it must also contain all the edges \begin{math}d^v_j e^v_j\end{math}, \begin{math}1 \le j \le 2n+1\end{math}. But then \begin{math}\sigma(T(G)) = |M| \ge 2n + 1\end{math}, which contradicts the inequality proven before.

Therefore, for every vertex \begin{math}v \in V(G)\end{math}, $M$ contains either a single edge \begin{math}v c^v\end{math} or two edges \begin{math}a^v b^v\end{math} and \begin{math}c^v d^v_i\end{math} for some $i$. The vertices of the first type induce an independent set in $G$, because $M$ is an induced matching. Thus, \begin{math}\alpha(G) + |M| \ge 2n\end{math}, and further \begin{math}\alpha(G) \ge 2n - |M| = 2n - \sigma(T(G))\end{math}, as required.
\end{proof}

To proceed, we need the following well-known result.

\begin{theorem}[\cite{Murphy}]
\label{Murphy}
For arbitrary fixed constants \begin{math}c > 0\end{math} and \begin{math}0 \le r < 1\end{math}, the \textsc{Maximum Independent Set} problem is $\mathrm{NP}$-hard when restricted to planar graphs $G$ with vertices of degrees $2$ and $3$ and containing no cycles of length less than $cn^r$, where \begin{math}n = |V(G)|\end{math}.
\end{theorem}

\begin{theorem}
For arbitrary fixed constants \begin{math}a > 0\end{math} and \begin{math}0 \le d < \frac{1}{3}\end{math}, the \textsc{Minimum Maximal Distance-$2$ Matching} problem is $\mathrm{NP}$-hard in the class of planar graphs $H$ with girth at least $aN^d$, where \begin{math}N = |V(H)|\end{math}.
\end{theorem}
\begin{proof}
We shall build a polynomial time reduction from the \textsc{Maximum Independent Set} problem subject to the restrictions of Theorem \ref{Murphy}.
Let \begin{math}a > 0\end{math} and \begin{math}0 \le d < \frac{1}{3}\end{math} be fixed for the course of the proof.
Choose \begin{math}c = 2^d a\end{math} and \begin{math}r = 3d\end{math} and observe that they satisfy the constraints of Theorem \ref{Murphy}.

Given a planar graph $G$ with girth at least $cn^r$, where $n=|V(G)|$, we can clearly build \begin{math}H = T(G)\end{math} in polynomial time. Observe that $H$ is also planar, has no new cycles and \begin{math}N = |V(H)| = (3 + 2(n+1))n = 2n^2 + 5n\end{math}. Then, for a sufficiently large $n$, \begin{math}2n^3 > N\end{math}. Every cycle of $G$ (and therefore of $H$) has length at least \begin{math}cn^r = a (2n^3)^d > aN^d\end{math}. By Lemma \ref{sigmaAlpha}, an independent set of size at least $s$ exists in $G$ if and only if $H$ contains a maximal induced matching of size no more than \begin{math}2n - s\end{math}.
\end{proof}

We also prove, using a very simple reduction, that the \textsc{Minimum Independent Dominating Set} problem is $\mathrm{NP}$-complete in planar graphs of large girth.

\begin{theorem}
For arbitrary fixed constants \begin{math}a > 0\end{math} and \begin{math}0 \le d < 1\end{math}, the \textsc{Minimum Independent Dominating Set} problem is $\mathrm{NP}$-hard in the class of planar graphs $G$ of degree at most $5$ and girth at least $aN^d$, where $N$ is the number of vertices of~$G$.
\end{theorem}
\begin{proof}
We again build a polynomial time reduction from the the \textsc{Maximum Independent Set} problem subject to the restrictions of Theorem \ref{Murphy}. Assume \begin{math}a > 0\end{math} and \begin{math}0 \le d < 1\end{math} are fixed thoughout the proof. Choose \begin{math}c = 3^d a\end{math} and \begin{math}r = d\end{math}, thus satisfying the constraints of Theorem \ref{Murphy}.

Let \begin{math}G = (V, E)\end{math} be a planar graph with girth at least $cn^r$ and degree at most $3$ provided as the input of \textsc{Maximum Independent Set}, where $n=|V|$. Construct the graph \begin{math}G' = (V', E')\end{math} as follows. Take \begin{math}V' = V \cup \{a_i, b_i \mid 1 \le i \le n\}\end{math}, where \begin{math}n = |V|\end{math}. Take

\begin{displaymath}E' = E \cup \{v_i a_i, v_i b_i \mid 1 \le i \le n\}.\end{displaymath}

Less formally, the transformation consists of attaching two distinct endvertices to each vertex.

Observe that \begin{math}N = |V'| = 3n\end{math}. Every cycle of $G'$ is also a cycle of $G$ and therefore has length at least \begin{math}cn^r = \dfrac{c}{3^d}(3n)^d = aN^d\end{math}.

We claim that \begin{math}\alpha(G) + i(G') = 2n\end{math}. Indeed, let $I$ be a maximum independent set in $G$. Then \begin{math}I' = I \cup \{a_i, b_i \mid v_i \in V \setminus I\}\end{math} is an independent dominating set in $G'$. Thus, \begin{math}|I'| = \alpha(G) + 2 (n - \alpha(G)) = 2n - \alpha(G) \ge i(G')\end{math}, or \begin{math}\alpha(G) + i(G') \le 2n\end{math}.

Next, let $I'$ be a minimum independent dominating set in $G'$. Then each set \begin{math}\{v_i, a_i, b_i\}\end{math} contains either exactly one vertex of $I'$ (and then it is $v_i$) or exactly two vertices of $I'$ (and therefore they are $a_i$ and $b_i$). Then \begin{math}|I'| = |V \cap I'| + 2|V \setminus I'| = 2|V| - |V \cap I'| = 2n - |I|\end{math}, where \begin{math}I = V \cap I'\end{math} is an independent set in $G$ of size
\begin{math}|I| = 2n - |I'| = 2n - i(G')\end{math}. Hence, \begin{math}\alpha(G) \ge 2n - i(G')\end{math}, or \begin{math}\alpha(G) + i(G') \le 2n\end{math}, which completes the proof.
\end{proof}

\section{Concluding remarks}\label{sec-concl}
In Section \ref{k-equip}, we have shown that the class $\mathcal{R}_k$ from \cite{DistPack} is an interesting subclass of the class of $k$-equipackable graphs, a natural extension of well-covered graphs. Applying this result to line graphs allows to show that every graph with equal maximum distance-$k$ and distance-$2k$ matching sizes is $k$-equimatchable. It is notable that belonging to $\mathcal{R}_k$ for a strongly chordal graph is equivalent to having equal $k$-domination and $k$-packing numbers. It is an open question whether there are any other $k$-equipackable strongly chordal graphs or whether this duality result can be extended to some larger class of graphs.

We have shown that the \textsc{Distance-$k$ Weighted Minimum Maximal Matching} problem is NP-hard to approximate in the class of chordal graphs within a factor of \begin{math}c \ln n\end{math}, where $n$ is the number of the input graph vertices, for some fixed constant $c$ and every fixed even $k$ (Section \ref{non-approx}). It thus remains an open problem to obtain corresponding approximation bounds in chordal graphs for odd values of $k$. The same question applies to the \textsc{Weighted $k$-Independent Dominating Set} problem for even $k$. It is also open whether the obtained inapproximability bounds are tight.

The hardness result for the \textsc{Minimum Maximal Induced Matching} problem in large-girth planar graphs from Section \ref{sec-restr} is obtained using a construction with the maximum vertex degree more than the number of the input graph vertices. It is thus interesting whether these results hold when the vertex degrees are limited. Further research might also concern the case of minimum maximal distance-$k$ matchings in large-girth graphs for higher values of $k$.

\section*{Acknowledgements}
We thank Yury Orlovich for stating the initial problems that led to the main results of Sections \ref{k-equim} and \ref{non-approx}, as long as for useful comments and suggestions, and Dieter Rautenbach for a discussion about graphs with equal $k$-packing and $2k$-packing numbers (Section \ref{k-equip}). The technique used in Section \ref{sec-restr} is inspired by an unpublished work of Oleg Duginov and Eugene Dolzhenok on the dissociation sets problem. We also thank anonymous reviewers for their numerous helpful comments on the preliminary versions of this paper.

This work has been partially supported by the Belarusian BRFFR grant (Project F15MLD-022).

\nocite{*}
\bibliographystyle{abbrvnat}

\end{document}